\newcommand{\UU}{\text{U}}
\newcommand{\V}{\mathcal{V}}
\newcommand{\SU}{\text{SU}}
\newcommand{\Sn}{\mathbb{S}_n}
\newcommand{\R}{\mathbb{R}}
\newcommand{\C}{\mathbb{C}}
\newcommand{\<}{\langle}
\renewcommand{\>}{\rangle}
\renewcommand{\H}{\mathcal{H}}
\newcommand{\Zhat}{J_z}
\newcommand{\Nhat}{a^{\dagger}a}
\newcommand{\HTC}{{H}_{\text{TC}}}
\newcommand{\UTC}{V_{\text{TC}}}
\newcommand{\UZ}{R_z}
\newcommand{\alg}{\mathfrak{alg}}
\newcommand{\jmin}{j_\text{min}}
\newcommand{\qj}{_{q,j}}
\newcommand{\Vswap}{V_{\text{qub.}\leftrightarrow\text{osc.}}}
\newcommand{\Uo}{\text{U(1)}}
\newcommand{\vac}{\ket{0}_{\text{osc}}}
\newcommand{\gTC}{g_{\text{TC}}}
\newcommand{\kosc}{\ket{k}_{\text{osc}}}
\newcommand{\1}{\mathbb{I}}
\newcommand{\qeq}{\quad=\quad}
\newcommand{\eq}{\,=\,}
\newcommand{\s}{\text{Span}}
\newcommand{\p}{\,+\,}
\newcommand{\ga}{\gamma}
\newcommand{\de}{\delta}
\newcommand{\ipo}[3]{\left\langle #1 \middle| #2 \middle| #3 \right\rangle}
\newcommand{\pure}[1]{|#1\rangle\langle #1|}
\newcommand{\bes} {\begin{subequations}}
\newcommand{\ees} {\end{subequations}}
\newcommand{\bea} {\begin{eqnarray}}
\newcommand{\eea} {\end{eqnarray}}
\newcommand{\be} {\begin{equation}}
\newcommand{\ee} {\end{equation}}
\def\al{\alpha}
\def\>{\rangle}
\def\<{\langle}
\newcommand{\ignore}[1]{}
\newtheorem{thm}{Theorem}
\newcounter{step}
\newtheorem{theorem}{Theorem}
\newtheorem{corollary}[theorem]{Corollary}
\newtheorem{lemma}[theorem]{Lemma}
\newtheorem{proposition}[theorem]{Proposition}
\crefname{tab}{Table}{Tables}
\crefname{eq}{Eq.}{Eqs.}
\crefname{section}{Sec.}{Sections}
\crefname{proposition}{Proposition}{Propositions}
\crefname{step}{Step}{Steps}
\crefname{enumi}{Part}{parts}
\Crefname{enumi}{Part}{Parts}
\renewcommand\onecolumngrid{%
  \do@columngrid{one}{\@ne}%
  \def\set@footnotewidth{\onecolumngrid}%
  \def\footnoterule{\kern-6pt\hrule width 1.5in\kern6pt}%
}
\begin{document}

\title{Permutation-Invariant $N$-body Gates via the Tavis-Cummings Interaction}

\author{Plato Deliyannis}
\email{plato.deliyannis@duke.edu}
\affiliation{Duke Quantum Center and Department of Physics, Duke University, Durham, NC 27708, USA}
\author{Iman Marvian}
\email{iman.marvian@duke.edu}
\affiliation{Duke Quantum Center and Department of Physics, Duke University, Durham, NC 27708, USA}\affiliation{Department of Electrical and Computer Engineering, Duke University, Durham, NC 27708, USA}

\begin{abstract}
Global control provides a promising route to implementing multi-qubit gates without individual qubit addressing.
This is especially appealing for permutation-invariant (PI) gates, whose symmetry is often broken when they are compiled into individually addressed one- and two-qubit gates.
Important examples include SWAP, $\sqrt{i\text{SWAP}}$, and the $n$-qubit controlled-$Z$ gate, which is equivalent, up to two single-qubit Hadamard gates, to the multi-qubit Toffoli gate.
Motivated by this global-control perspective, we show that all PI unitaries on an arbitrary number of qubits can be realized using the Tavis-Cummings (TC) interaction, the multi-qubit version of the Jaynes-Cummings interaction, together with global uniform $z$ and $x$ fields.
Here, the $n$ qubits are identically coupled to a single bosonic mode (oscillator), which is initialized in and returned to its vacuum state.
A corollary is that all PI states, including GHZ and Dicke states, can be prepared using the same global control.
For the case $n=2$ qubits, which is particularly important in quantum computing, we also find explicit pulse sequences for implementing \textit{all} PI qubit unitaries that conserve angular momentum in the $z$ direction, using only the TC interaction and global $z$ fields.
This includes controlled-$Z$, SWAP, and $\sqrt{i\text{SWAP}}$.
\end{abstract}

\maketitle

\section{Introduction}
A central goal of quantum computing is to use experimentally accessible interactions to implement elementary operations from which complex quantum computations can be built.
Although the standard circuit model is often formulated in terms of individually controllable qubits, many physical platforms realize interactions between qubits indirectly, through their coupling to a common harmonic oscillator, or bosonic mode.
A fundamental model for resonant qubit-oscillator coupling is the Jaynes-Cummings (JC) interaction \cite{Jaynes_Cummings_1963,JC_history}, which has been widely used, together with variants thereof, to realize multi-qubit operations for universal quantum computation \cite{Childs_Chuang_2000,Yuan_Lloyd_2007,Cirac_Zoller_1995}.
Prominent examples include trapped ions coupled to collective vibrational modes \cite{Leibfried_2003_trappedion,Haffner_2008_trappedion}, superconducting qubits coupled to microwave resonators in circuit QED \cite{Blais_2004_cQED,Wallraff_2004_cQED}, and atoms coupled to cavity modes in cavity QED \cite{Raimond_2001_cavity,Walther_2006_cavity,Varcoe_2000_cavity}.
This bosonic-mode-mediated paradigm has played a central role in the implementation of multi-qubit gates, including the M{\o}lmer-S{\o}rensen entangling gate in trapped-ion systems \cite{Molmer_Sorensen}.

In the usual approach, a desired multi-qubit unitary is decomposed into a sequence of elementary one- and two-qubit gates, each of which is then implemented by addressing the relevant qubits individually.
This paradigm places substantial demands on control: even when the target operation has a high degree of symmetry, its standard implementation often breaks that symmetry at the level of the control sequence.
For example, the SWAP gate, which exchanges the states of two qubits and is itself permutation-invariant (PI), is commonly implemented by decomposing it into a sequence of elementary two-qubit gates, such as CNOT gates.
Other important examples of PI two-qubit gates include iSWAP, $\sqrt{i\mathrm{SWAP}}$, and controlled-$Z$.
Similarly, the $n$-qubit controlled-$Z$ gate
\begin{align}\label{eq:CZ}
    \mathrm{CZ}_{n-1}
    \eq
    \mathbb{I} - 2\ket{1}\bra{1}^{\otimes n}
\end{align}
is PI, yet its standard realization is typically compiled into a sequence of lower-body gates requiring individual qubit control.
This gate is equivalent, up to single-qubit Hadamard gates, to the $n$-qubit Toffoli gate, a key primitive in quantum algorithms and quantum error correction.
More generally, important families of PI unitaries include gates such as $e^{i\theta Z^{\otimes n}}$, generated by $n$-body Hamiltonians such as $Z^{\otimes n}$.
Such unitary transformations are useful not only as primitives for quantum computation and quantum error correction, but also as representatives of a broader class of many-body operations that are difficult to realize by conventional compilation into few-body gates.
Many-body interactions provide a powerful resource for quantum computing, with applications ranging from Grover search \cite{Wang_MS_2001} and digital simulations of lattice gauge theories \cite{Ciavarella_etal_2021} to the simulation of fermionic systems \cite{Seeley_etal_2012} via Jordan-Wigner \cite{Jordan_Wigner_1928} or Bravyi-Kitaev \cite{Bravyi_Kitaev_2002} encodings, and to the simulation of many-body quantum dynamics \cite{Müller_etal_2011}.

\begin{figure*}
\centering{
\includegraphics[width=0.7\textwidth]{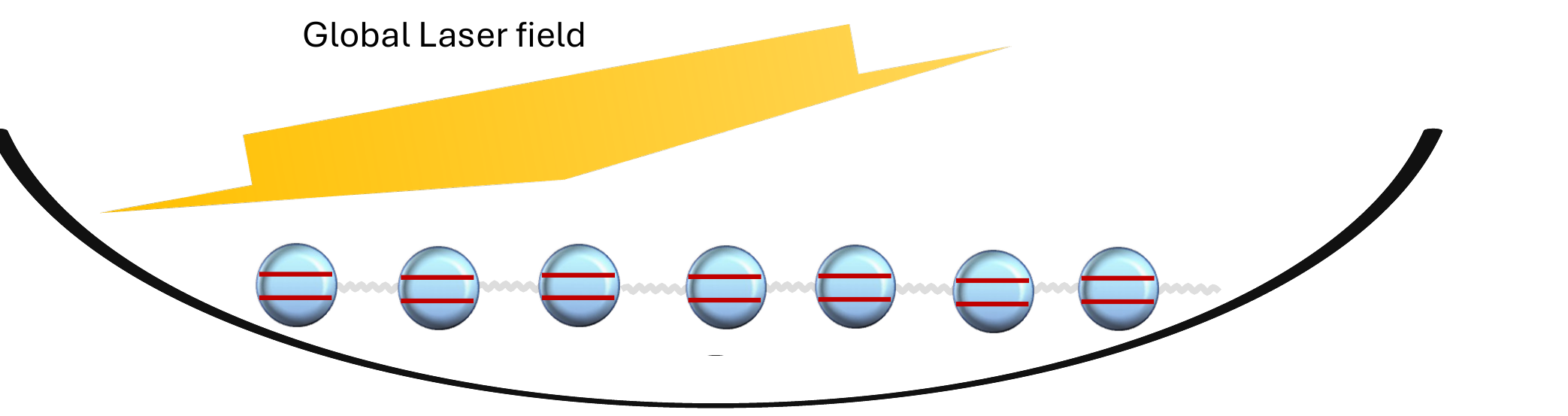}}
     \caption{\textbf{Example of global control in trapped ion systems.} By applying global laser fields, the collective motional modes of the ions can be coherently coupled to their internal degrees of freedom.
     Under suitable approximations, this interaction is effectively described by the TC Hamiltonian defined in \cref{eq:TCham}
     Importantly, this scheme eliminates the need for individual qubit addressing, significantly simplifying experimental implementation and enhancing scalability in trapped-ion quantum processors.}
 \label{fig:global_control}
\end{figure*}

\subsection*{Summary of main results}
The above discussion motivates a natural question: can two- and many-body PI gates be realized directly using only global controls, without addressing individual qubits?
In this work, we answer this question in the affirmative.
We consider the Tavis-Cummings (TC) model \cite{tavis_1968_exact_solut,TC2_1969}, the natural multi-qubit generalization of the Jaynes-Cummings model, in which several qubits couple to a common bosonic mode (oscillator).
Here we focus on the PI setting in which all qubits couple \textit{identically} to the bosonic mode.
The resulting TC Hamiltonian $\HTC$, defined in \cref{eq:TCham}, arises naturally, for example, when the internal degrees of freedom of ions in a linear chain couple to their center-of-mass vibrational mode, and has been studied extensively in atomic physics, quantum optics, and quantum control \cite{Bashir_Abdalla_1995,Bogoliubov_etal_1996,Rybin_etal_1998,Tessier_etal_2003,Vadeiko_etal_2003,Genes_etal_2003,Fink_etal_2009,Agarwal_etal_2012,Zou_etal_2013,Keyl_2014_control}.

Our first main result, stated in \cref{thm:mainThm}, shows that \emph{any PI unitary transformation on $n$ qubits can be realized using the TC interaction and global $z$ and $x$ fields, with the bosonic mode initialized in and returned to the vacuum state, or more generally in any eigenstate of $a^\dag a$, the occupation number operator (equivalently, the harmonic oscillator Hamiltonian)}.
As discussed above, using PI controls is practically appealing because it avoids individual qubit addressing: all control operations act identically on all qubits; see \cref{fig:global_control}.
Such global-control features have already been exploited experimentally to improve system performance; see, e.g., \cite{Lu_etal_2019,Katz_etal_2023_34qubit}.
As an immediate corollary, we prove that all PI states can be prepared using the TC  interaction and global fields; see \cref{cor:state_prep}.
This includes useful entangled states such as GHZ and Dicke states, which are important resources, for example, in quantum sensing and metrology.
We also show that, by combining this state-preparation result with a qubit-oscillator swap operation, one can prepare arbitrary states in the lowest $(n+1)$ energy levels of the bosonic mode; see \cref{cor:state_prep2}.
\vspace{0.4em}

A simple but important idea underlying our constructions is to use the global $x$ field only when the bosonic mode is decoupled from the qubits and has returned to its initial number state (see \cref{fig:x-z}).
This strategy is motivated by the U(1) symmetry of the TC interaction, associated with conservation of the total excitation number $a^\dag a+J_z$.
Consequently, when only $\HTC$ and $J_z$ are used, the joint Hilbert space decomposes into a direct sum of finite-dimensional invariant subspaces labeled by the total excitation number.
This decomposition is one of the basic simplifying features of the Jaynes-Cummings and Tavis-Cummings models, which has been used since its early days.

By contrast, the global $x$ field, corresponding to Hamiltonian $J_x$, does not conserve $a^\dag a+J_z$.
If applied while the bosonic mode is not in its initial state, it can couple the dynamics to arbitrarily high energy levels, making the problem genuinely infinite-dimensional.

This motivates a useful first step: before allowing global $x$ rotations, we characterize which qubit unitaries can be realized using only the TC interaction and the global $z$ field, with the bosonic mode serving as an ancillary system and returned to its initial number state.
There are also practical reasons to understand this restricted setting.
In platforms such as trapped ions and superconducting qubits, a global $x$ field does not commute with the intrinsic qubit Hamiltonian, which is oriented along the $z$ direction.
As a result, such non-energy-conserving operations can be more sensitive to certain noise sources, for instance fluctuations in the reference clock that sets the timing and phase of the pulses.
Therefore, it is interesting and useful to identify which operations can be realized without using an $x$ field.

These considerations lead to the following restricted-control problem: which PI qubit unitaries can be implemented using only $\HTC$ and $J_z$, with the bosonic mode initialized in and returned to a fixed number state?
Since these Hamiltonians conserve the total excitation number $a^\dag a+J_z$, any qubit unitary realized with the bosonic mode returned to its initial number state also conserves $J_z$. 
 
This raises a sharper question: \emph{can we realize all PI unitaries that also conserve $J_z$ using only $\HTC$ and $J_z$?}
Examples of such gates include $\mathrm{CZ}_{n-1}$ in \cref{eq:CZ} and unitaries generated by Hamiltonian $Z^{\otimes n}$.

\begin{figure*}
  \centering{
  \includegraphics[width=0.8\textwidth]{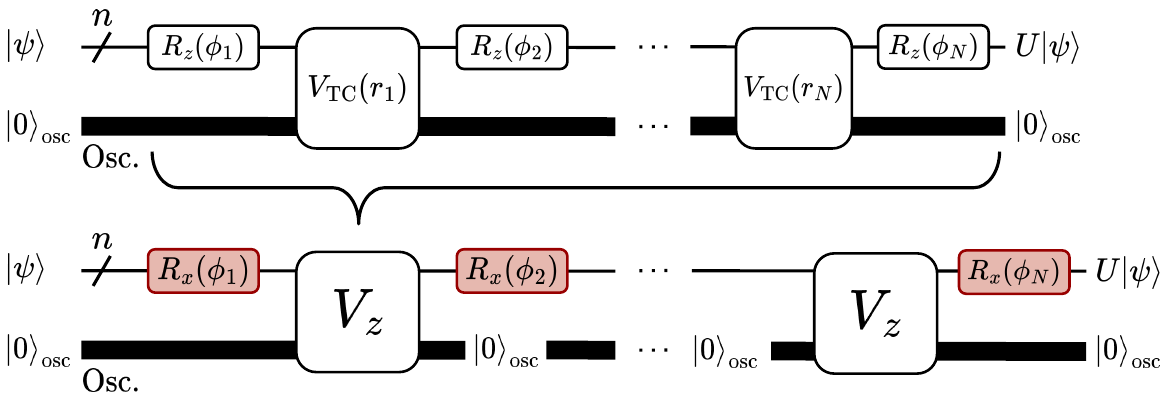}
  \caption{
    \textbf{Implementing multi-qubit PI      unitaries using the TC interaction.}
    In our constructions, the global field in the $x$ direction, i.e., the Hamiltonian $J_x$, is applied only after the bosonic mode has returned to its initial state, for example, the vacuum state $\vac$ shown in this figure.
    This prevents \textit{uncontrolled} leakage to arbitrarily high oscillator excitations.
    \textbf{Top:} A PI, $J_z$-conserving $n$-qubit unitary $U$ is realized using a global $z$ field and the TC interaction, which couples the qubits to a bosonic mode initialized in and returned to the vacuum state $\vac$, as in \cref{eq:ancilla}.
    \cref{thm:general_k_ancilla} characterizes the unitary transformations $U$ that can be realized on the qubits in this way.
    \textbf{Bottom:} Unitaries of the form shown in the top circuit are interleaved with global $x$ rotations, enabling the realization of arbitrary $n$-qubit PI unitaries; see \Cref{thm:mainThm}.}\label{fig:x-z}
}
\end{figure*}
Surprisingly, we find that the answer depends both on the number of qubits and on the initial number state of the oscillator (see \cref{thm:general_k_ancilla}).
In particular, if the bosonic mode is initially prepared in its ground state, i.e., the vacuum, which is the most common setting, then the answer is affirmative for $n=2$ and $n=3$ qubits, but not for $n\ge 4$.
For $n\ge4$, additional phase constraints appear; in particular, these constraints forbid the implementation of $\mathrm{CZ}_{n-1}$.
Interestingly, however, we find that the conjugated gate $X^{\otimes n}\mathrm{CZ}_{n-1}X^{\otimes n}$ does satisfy the constraint and is therefore realizable in this restricted setting.
This asymmetry originates from the fact that the vacuum is the lowest oscillator number state: states with minimal $J_z$ in each angular-momentum sector cannot exchange excitations with the oscillator and therefore can acquire only the phases generated by global $z$ rotations.

These unexpected constraints in the absence of $J_x$ also highlight why the main theorem (\cref{thm:mainThm}) is nontrivial.
Indeed, several independent obstructions might suggest that universal control over PI unitaries is impossible in this setting: the constraints identified in \cref{thm:general_k_ancilla}, the no-go theorem of \cite{marvian_sym_loc_2022}, which places severe restrictions on realizing PI unitaries using interactions that decompose into sums of $k$-body terms with $k<n$, and the accidental symmetry of the TC interaction identified in our companion paper \cite{symmetry_paper}.
The main theorem shows that, nevertheless, these obstructions can be overcome once global $x$ rotations are included.

\vspace{0.2em}
Finally, in addition to these general results, we provide explicit circuit constructions in the especially relevant case of $n=2$ qubits.
As mentioned above, in this case, all PI unitaries that conserve $J_z$ can be realized using $\HTC$ and $J_z$, with the oscillator initialized in and returned to the vacuum.
We provide exact circuits for several useful PI gates, including controlled-$Z$ (CZ), SWAP, iSWAP, and $\sqrt{i\text{SWAP}}$; see \Cref{tab:2Q_simple_times,tab:circuit_table}.
More generally, we present a synthesis method for implementing any two-qubit PI, $J_z$-conserving unitary using finite sequences of gates generated by $\HTC$ and $J_z$.
These circuits are exact, and their length, measured by the total time for which the TC interaction is turned on, is upper bounded by approximately $3.92\times 2\pi\gTC^{-1}$, where $\gTC$ is the TC coupling strength (see \cref{eq:TCham}).

\begin{table}[]
    \centering
    \renewcommand{\arraystretch}{1.5}
    \setlength{\tabcolsep}{6pt}
    \begin{tabular}{l|c}
        \textbf{Gate} & \shortstack{\textbf{Time}  $(\tau\times\gTC/2\pi)$} \\\hline
         CZ & $\approx {2.87}$\\
         SWAP & {$\approx1.27$}\\
         iSWAP & $\approx {2.55}$\\
         $\sqrt{\text{iSWAP}}$ & $\approx {2.68}$
    \end{tabular}
    \caption{\textbf{Interaction time for useful two-qubit PI gates.} We have derived explicit sequences of global $z$ fields and the TC interaction for the exact implementation, up to numerical precision, of the above useful two-qubit gates.
    Here, $\tau$ denotes the total time during which the TC interaction must be active, and $g_{\text{TC}}$ is the TC coupling strength; see \cref{sec:useful_2q_gate} for further details.}
    \label{tab:2Q_simple_times}
\end{table}

It is useful to compare this approach with two well-known methods for implementing controlled-$Z$, or equivalently CNOT up to single-qubit rotations, via qubit--oscillator interactions: the M{\o}lmer-S{\o}rensen \cite{Molmer_Sorensen} and Cirac-Zoller approaches \cite{Cirac_Zoller_1995}.
The M{\o}lmer-S{\o}rensen gate uses both JC and anti-JC interactions, corresponding to red and blue sidebands, and is robust against imperfect cooling of the oscillator to its ground state.
The Cirac-Zoller approach, by contrast, can use only the JC interaction, but requires individual control over the qubits.
Our two-qubit constructions use the TC interaction and global $z$ fields only: the qubits couple identically to the oscillator, and no individual addressing is required.

This paper is organized as follows.
In \cref{sec:setup}, we introduce the Tavis-Cummings Hamiltonian and formulate the problem of realizing qubit unitaries by coupling the qubits to an oscillator that returns to its initial number state.
In \cref{sec:main_results}, we present our main result, \cref{thm:mainThm}, and discuss applications to entanglement generation, PI state preparation, and oscillator state preparation.
In \cref{sec:TC_and_z}, we study the restricted setting with only  Hamiltonians $\HTC$ and $J_z$, characterize the resulting constraints, and explain why all PI, $J_z$-conserving gates are realizable for two and three qubits with the oscillator initialized in the vacuum.
Finally, in \cref{sec:useful_2q_gate}, we give explicit circuit constructions for two-qubit PI, $J_z$-conserving gates, including CZ, SWAP, iSWAP, and $\sqrt{i\mathrm{SWAP}}$.

\section{Setup: Tavis-Cummings Interaction}
\label{sec:setup}
We are interested in implementing permutation-invariant (PI) $n$-qubit unitary transformations, using global qubit control.
The simplest such operations are the uniform field Hamiltonians of the form $\omega_w(t)J_w$, where
\begin{align}
    J_w \,:=\, \frac{1}{2}\sum_{i=1}^{n} {\sigma}_w^{(i)}\quad:\quad w=x,y,z
 \label{eq:Zham}
\end{align}
are the $w$-components of total angular momentum on $n$ qubits, ${\sigma}_w^{(i)}$ is Pauli operator ${\sigma}_w$ on qubit $i$, and $\omega_w(t)$ are controllable qubit transition frequencies. 
These Hamiltonians generate rotations around the $x$, $y$, and $z$ axes, but they do not couple qubits together.

Furthermore, we assume that the qubits are coupled to a bosonic mode via the PI $n$-qubit version of the Jaynes-Cummings (JC) interaction.
This interaction is given by the Tavis-Cummings (TC) Hamiltonian,
\begin{align}
 \begin{split}
    \HTC &\eq \frac{\gTC}{2}\sum_{i=1}^{n} \Big(\sigma_+^{(i)}{a} + \sigma_-^{(i)}{a}^{\dag}\Big) \\[4pt]
    &\eq \gTC\,({J}_+{a} + {J}_-{a}^{\dag})\, ,
 \end{split}
 \label{eq:TCham}
\end{align}
where ${J}_{\pm}=J_x\pm i J_y$, $a$ is the annihilation operator on the bosonic mode Hilbert space satisfying $[a, a^\dag]=\mathbb{I}$, and $\gTC$ is a uniform coupling strength
\footnote{As usual, we often suppress tensor product in the notation, e.g. $J_+\otimes a\simeq J_+a$.}.
This Hamiltonian is commonly used, for example, as an effective approximation for describing the interaction between the motional modes of the ions and their internal degrees of freedom,  which serve as qubits \cite{Cirac_Zoller_1995,Molmer_Sorensen}.

We are interested in the family of unitaries $\V_{z-x}$ acting on the joint Hilbert space
\begin{align}
    \H_{\text{qubits}}\otimes\H_{\text{osc}} \,=\,(\C^2)^{\otimes n}\otimes\mathcal{L}^2(\R)\, ,
\end{align}
realized as \textit{quantum circuits} consisting of finite sequences of the three continuous one-parameter families of unitaries corresponding to each Hamiltonian:
\begin{align}
 \begin{split}
    R_z(\phi_z)&:=\exp(-i\phi_z J_z)\\[2pt]
    R_x(\phi_x)&:=\exp(-i\phi_x J_x)\\[2pt]
    \UTC(r)&:=\exp(-i r  \HTC /\gTC)\,,
 \end{split}
\end{align}
where $r\in\mathbb{R}$ and $\phi_z,\phi_x\in [-2\pi,2\pi)$  are dimensionless parameters.
(See, for example, \cref{tab:gadget_circuits}.)
\footnote{Since rotations around the $y$ axis can be synthesized from rotations around the $x$ and $z$ axes, we do not include them separately here.}
Note that
\begin{align}
    R_z(\pi)^\dag \HTC R_z(\pi)=-\HTC .
\end{align}
Thus, by applying a $\pi$ rotation about the $z$ axis, we can reverse the sign of $\HTC$, or equivalently the sign of the coupling $\gTC$.
It is therefore natural to allow $\UTC(r)$ with both positive and negative values of $r$.

Equivalently, $\V_{z-x}$ is the \textit{group} of unitary transformations that can be realized using the time- dependent Hamiltonian
\begin{equation}
    H_{z-x}(t) \,:=\, f(t) \HTC + \omega_z(t) J_z + \omega_x(t) J_x \,,
 \label{eq:totH}
\end{equation}
for arbitrary real functions $f(t)$ and qubit transition frequencies $\omega_z(t)$ and $\omega_x(t)$; under the Schr\"odinger equation $dV(t)/dt=-i H_{z-x}(t) V(t) : 0 \le t\le T $ for arbitrary time $T$, with the initial condition $V(0)=\mathbb{I}$.
\footnote{
The harmonic oscillator's intrinsic Hamiltonian $H_\text{osc}=\nu a^\dag a$, is often fixed and time-independent.
\Cref{eq:totH} represents the dynamics in the \textit{interaction picture} -- i.e. in a rotating frame given by transformation $\ket{\psi(t)}_{\text{Rot}} = \exp(i \nu (a^\dag a + J_z) t) \ket{\psi(t)}_{\text{Lab}}$, in which the oscillator has no intrinsic time evolution.
(See \cite{theory_paper} for further discussion.)}

\vspace{1mm}
The main goal of this paper is to characterize the unitary transformations on
$\H_{\text{qubits}}=(\mathbb{C}^2)^{\otimes n}$ that can be implemented by elements of $\mathcal{V}_{z-x}$ acting on $\H_{\text{qubits}}\otimes\H_{\text{osc}}$, while returning the bosonic mode to its initial state.
We assume the bosonic mode is initialized in its ground (vacuum) state $\ket{0}_{\text{osc}}$, or more generally, an eigenstate $\ket{k}_{\text{osc}}$ of the occupation number operator (or, equivalently, the intrinsic harmonic oscillator Hamiltonian) $a^{\dag}a$.
Then, to implement a desired $n$-qubit unitary transformation $U:(\mathbb{C}^2)^{\otimes n}\rightarrow (\mathbb{C}^2)^{\otimes n}$, one finds a unitary $V\in \mathcal{V}_{z-x}$ such that for any initial qubit state $\ket{\psi}\in(\mathbb{C}^2)^{\otimes n}$, it holds that
\begin{align}\label{eq:ancilla}
    V\big(\ket{\psi}\otimes \ket{k}_{\text{osc}}\big)=\big(U\ket{\psi}\big)\otimes \ket{k}_{\text{osc}}\,,
\end{align}
which can be equivalently expressed as $U=\ipo{k}{V}{k}_{\text{osc}}$.
We often consider implementing the unitary $U$ up to a global phase, which is physically irrelevant
\footnote{This is equivalent to adding the identity operator to the set of realizable Hamiltonians.}, meaning
\begin{align} \label{eq:ancilla2}
    e^{i\alpha}\ipo{k}{V}{k}_{\text{osc}} = U\,\,:\,\,\alpha\in[0,2\pi)\,,
\end{align}
for some phase $\al$.
Then, we construct quantum circuits implementing the corresponding unitary $V$ as sequences of unitaries $R_z(\theta)$, $R_x(\theta)$, and $V_{\text{TC}}(r)$.

\section{Permutation-Invariant (PI) unitaries from the TC interaction}
\label{sec:main_results}
Since all Hamiltonians $H_{z-x}(t)$ act identically on all qubits, all unitary transformations $V$ realized by such Hamiltonians are also PI. 
This, in turn, implies that any $n$-qubit unitary $U$ realized via \cref{eq:ancilla} is PI.
The question is, can we realize all PI unitaries in this way? Our first main result gives an affirmative answer to this question:

\begin{theorem}\label{thm:mainThm} All PI unitary transformations on $n$ qubits can be realized, up to a global phase, using the Hamiltonian $H_{z-x}(t)$ in \cref{eq:totH}, i.e., uniform global $z$ and $x$ fields $J_z$ and $J_x$ and the Tavis-Cummings interaction $H_{\text{TC}}$, where the bosonic mode (oscillator) is initialized in and returned to an arbitrary eigenstate $\ket{k}_{\text{osc}}$ of the intrinsic oscillator Hamiltonian $a^{\dag}a$.
\end{theorem}
\Cref{fig:x-z} gives an overview of the proof of this theorem, which is further expanded in  \cref{app:theorem_proof}.
In particular, a key ingredient is \cref{thm:general_k_ancilla} below, which characterizes the unitaries realizable using only the Hamiltonians $\HTC$ and $J_z$.
Combining these unitaries with global $x$ rotations, we obtain all PI unitaries.

\subsection{Applications: Entanglement Generation and Preparation of GHZ and Dicke states}
\label{sec:entanglement}
An important corollary of \cref{thm:mainThm} is that any two PI pure $n$-qubit states, equivalently any two states in the totally symmetric subspace of $(\C^2)^{\otimes n}$, can be transformed into one another using Hamiltonians $\HTC$, $J_x$, and $J_z$.
In particular, every such state can be prepared from the easily accessible initial state $\ket{0}^{\otimes n}$.
This includes useful states such as the GHZ state \cite{GHZ_1989}, $(|0\rangle^{\otimes n} + \ket{1}^{\otimes n})/\sqrt{2}$, and Dicke states, i.e., the eigenstates of $J_z$ in the totally symmetric subspace.
(Inside this subspace, $J_z$ is non-degenerate).
\begin{corollary} \textbf{(Entanglement generation)}
    Any desired state in the symmetric subspace of $n$ qubits can be obtained from the initial state $|0\rangle^{\otimes n}$, using the Tavis-Cummings interaction $H_{\text{TC}}$, which couples the qubits to an oscillator initialized in and returned to the vacuum state, along with uniform global $z$ and $x$ fields $J_z$ and $J_z$.
 \label{cor:state_prep}
\end{corollary}

\begin{proof}
This follows from \cref{thm:mainThm}, because any desired PI state $\ket{\psi} \in (\mathbb{C}^2)^{\otimes n}$ can be prepared by implementing a unitary that acts trivially on the entire Hilbert space except for the 2D subspace spanned by $\ket{\psi}$ and $ |0\rangle^{\otimes n}$.  
Any such unitary will be PI; explicitly, one can use for example, the direct rotation unitary $\sqrt{(I-2\ket{\psi}\langle\psi|) (I-2|0\rangle\langle 0|^{\otimes n})}$.
\end{proof}

We note that, prior to this work, the possibility of preparing states in the symmetric subspace was known only in a few specific cases.
For instance, \cite{Retzker_2007} demonstrates how to prepare Dicke states, albeit requiring the harmonic oscillator to be in excited states, whereas our result shows that such states can be prepared with the oscillator in the ground state.
See also \cite{Mu_etal_2020, Zhang_etal_2024} for methods of preparing Dicke and GHZ states, using the TC and anti-TC interactions, plus a Stark Hamiltonian $J_z\otimes\Nhat$.

\section{Respecting the U(1) symmetry: Realizable \\unitaries using Hamiltonians $\HTC$ and $J_z$}
\label{sec:U1_symmetry}
In this section, we characterize the unitary transformations that can be realized on the qubits using only the Hamiltonians $J_z$ and $\HTC$ (see
\cref{thm:general_k_ancilla}).
This characterization is needed for the proof of our main result, \cref{thm:mainThm}, but it is also useful and interesting in its own right: it clarifies what can be achieved without the additional control Hamiltonian $J_x$.
Unlike $J_z$ and $\HTC$, this control breaks the U(1)
symmetry and is, in some sense, more demanding, since it does not commute with the intrinsic qubit Hamiltonian, which we assume to be oriented along the
$z$ direction.

That is, we focus on the subgroup $\mathcal{V}_z\subset\mathcal{V}_{z-x}$ of unitaries generated by time evolution under Hamiltonians of the form
\begin{align}
    H_z(t) \,:=\, f(t)\HTC + \omega_z(t)J_z \,.
 \label{eq:totHz}
\end{align}
Equivalently, $\mathcal{V}_z$ is the set of quantum circuits realized as finite sequences of unitaries $R_z(\phi)$, with $\phi\in[-2\pi,2\pi)$, and
$V_{\text{TC}}(r)$, with $r\in\mathbb{R}$.

With the global $x$ field turned off, the dynamics generated by $H_z(t)$ conserve the operator $a^\dag a+J_z$, which can be interpreted as the total number of excitations in the qubit-bosonic system.
In other words, the time evolution respects the U(1) symmetry with unitary representation
\begin{align}
    \exp(i\theta J_z)\otimes \exp(i\theta a^\dag a)\quad:\quad \theta\in[0,4\pi)\,.
\end{align}
This symmetry implies that $\H_{\text{qubits}}\otimes\H_{\text{osc}}$
decomposes into a direct sum of finite-dimensional invariant subspaces, each labeled by a distinct eigenvalue of $a^\dag a+J_z$, or equivalently by the total number of excitations.

\vspace{0.3em}
In our companion paper \cite{theory_paper}, we fully characterize $\mathcal{V}_z$ when restricted to any finite, but arbitrarily large, collection of these invariant subspaces.
This characterization shows that the unitaries realizable using only
$\HTC$ and $J_z$ are subject to two distinct types of constraints beyond the obvious permutational and U(1) symmetries, i.e., the $\Uo\times\Sn$ symmetry.
The first are generic central constraints: the center of $\mathcal{V}_z$ is smaller than the center of the full group of unitaries respecting these
symmetries, so not all relative phases between inequivalent symmetry sectors can be independently controlled.
The second type arises from an additional ``accidental'' symmetry of the TC interaction, which forces the realized unitaries in certain invariant sectors to coincide for $n\geq 3$.
In the companion paper \cite{symmetry_paper}, we explain this unexpected symmetry using Schwinger's oscillator model of angular momentum and analyze its consequences for controllability.
We also show that these accidental constraints are not fundamental consequences of the $\Uo\times\Sn$ symmetry: they can be removed by adding the PI and U(1)-invariant Hamiltonian $J_z^2$.
In particular, $\HTC$, $J_z$, and $J_z^2$ generate all PI, U(1)-invariant unitary transformations, up to the generic central constraints.

\subsection{Which qubit unitaries are realizable using Hamiltonians $\HTC$ and $J_z$?}
\label{sec:TC_and_z}
We now return to the main goal of this section: characterizing the $n$-qubit unitaries $U$ on $\H_{\text{qubits}}=(\mathbb{C}^2)^{\otimes n}$ that can be realized using Hamiltonians $\HTC$ and $J_z$.
More precisely, we ask which such unitaries can be implemented by applying a unitary $V\in\mathcal{V}_z$ to the joint qubit-bosonic system, in such a way that
\begin{align} \label{eq:ancilla_2}
    V\big(\ket{\psi}\otimes \ket{k}_{\text{osc}}\big)=\big(U\ket{\psi}\big)\otimes \ket{k}_{\text{osc}}\,,
\end{align}
for a fixed eigenstate $\ket{k}_{\text{osc}}$ of the bosonic mode, but arbitrary unknown state $\ket{\psi}\in \H_{\text{qubits}}$.

Since the bosonic mode returns to its initial state at the end of the evolution, $a^\dag a$ is conserved, which in turn implies that $J_z$ must also be conserved, that is
\begin{align}
    [U, J_z]=0\,.
\end{align}
In other words, such unitaries respect the U(1) symmetry corresponding to rotations around $z$, i.e.,  $\exp(i\theta J_z) : \theta\in[0,4\pi)$.
Such unitary transformations are also sometimes called energy-conserving unitaries, as they preserve the total intrinsic Hamiltonian of the qubits, which is often proportional to $J_z$, the sum of Pauli $Z$ operators.

The question here is: can we realize \textit{any} $n$-qubit PI unitary $U$ that also conserves $J_z$, such as the $\text{CZ}_{n-1}$ gate, in this way? 
In other words, can we avoid using Hamiltonian $J_x$, which breaks this symmetry, when implementing them?
As summarized in \cref{thm:general_k_ancilla} below, the characterization of $\V_z$ in \cite{theory_paper} shows that the answer depends on the initial state of the bosonic mode.
If the bosonic mode is initialized in an eigenstate $\ket{k}_{\text{osc}}$ of $a^\dag a$ with $k\geq 1$, then any such unitary can be realized for all $n\geq 2$.
By contrast, if the bosonic mode is initialized in the vacuum state $\ket{0}_{\text{osc}}$, then this remains true for $n=2$ and $n=3$, but fails for $n\geq 4$.

Before presenting the formal result in \cref{thm:general_k_ancilla}, it is useful to first characterize the group of all unitaries on $(\mathbb{C}^2)^{\otimes n}$ that respect both permutational symmetry and the U(1) symmetry generated by $J_z$, i.e., those that commute with both the permutation action and $J_z$.
This group consists precisely of unitaries of the form
\begin{align}
    U \eq \sum_{j=j_{\min}}^{n/2}\sum_{m=-j}^je^{i\phi_{j,m}} P_{j,m}\quad:\quad\phi_{j,m}\in[0,2\pi)\,,
 \label{eq:charU}
\end{align}
where $P_{j,m}$ is the projector onto the subspace of $(\mathbb{C}^2)^{\otimes n}$ that is a simultaneous eigenspace of $J_z$, with eigenvalue $m$, and of the total angular momentum operator $J^2=J_x^2+J_y^2+J_z^2$, with eigenvalue $j(j+1)$.
Here, $j$ takes integer values $0,\dots, n/2$, when $n$ is even, and half-integer values $1/2,\dots ,n/2$  when $n$ is odd.
Furthermore, for general PI, U(1)-invariant unitaries, the phases $\phi_{j,m}$ are arbitrary and independent of each other.
As we further explain in \cref{app:schur_weyl_basis}, this characterization follows immediately from Schur-Weyl duality, together with the fact that $J_z$ is non-degenerate in each irrep of SU(2).

The following theorem, which is adapted from Theorem 9 and Corollary 10 of our companion paper \cite{theory_paper}, characterizes the subgroup of such unitaries realizable using only Hamiltonians $\HTC$ and $J_z$.

\begin{theorem} \label{thm:general_k_ancilla}
Consider a unitary transformation $U$ acting on $\H_{\text{qubits}}=(\mathbb{C}^2)^{\otimes n}$, and let $\ket{k}_{\mathrm{osc}}$ be an eigenstate of the number operator $a^\dag a$ with integer eigenvalue $k\geq 0$.
Then there exists a unitary $V\in\V_z$, i.e., a unitary realizable using Hamiltonians $\HTC$ and $J_z$, and global phase $e^{i\alpha}$, such that
\begin{align}
    V\big(\ket{\psi}\otimes\ket{k}_{\mathrm{osc}}\big) \eq 
    e^{-i\alpha}\big(U\ket{\psi}\big)\otimes\ket{k}_{\mathrm{osc}}
\end{align}
for all $\ket{\psi}\in(\mathbb{C}^2)^{\otimes n}$, if and only if the following conditions hold:
\begin{enumerate}
\item[(1)] $U$ is PI and preserves $J_z$, i.e., it is of the form given in \cref{eq:charU}.
\item[(2)] If $k=0$, i.e., the bosonic mode is in vacuum, then there exists $\beta\in[-2\pi,\,2\pi)$ such that
\begin{align} \label{eq:const0_main}
    \phi_{j,-j} = \al+j\beta \,(\mathrm{mod}\,2\pi)\quad\text{for all}\,\,j\,.
\end{align}
\end{enumerate}
\end{theorem}
In summary, we find that if the number of initial excitations $k\geq1$ , then all PI, U(1)-invariant unitaries on $n$ qubits can be realized with $\HTC$ and $J_z$.
However, as we discuss next, this is no longer true for $k=0$, i.e., when the bosonic mode is initialized in its vacuum state.

\subsection{The vacuum is restrictive: \\Unexpected constraints for $n\ge 4$ qubits}
According to this theorem, when the bosonic mode is initialized in the vacuum, the phases $\phi_{j,-j}$ associated with $m=-j$ are not independent: for all values of $j$, they are determined solely by a global phase $\alpha$ and a global rotation around $z$ by angle $\beta$, and hence by only two free parameters.
Thus this condition imposes additional restrictions on the realizable qubit unitaries precisely when the qubit Hilbert space contains more than two distinct angular-momentum sectors $j$.
For $n$ qubits, $j$ takes $\lfloor n/2 \rfloor+1$ distinct values.
Hence there are $\lfloor n/2 \rfloor+1$ phases $\phi_{j,-j}$, and additional restrictions arise precisely when
\[\lfloor n/2 \rfloor + 1 \,>\, 2\,,\]
or, equivalently, when $n\ge 4$.
For example, these constraints forbid the implementation of the family of CZ gates
\begin{align*}
    \mathrm{CZ}_{n-1} \eq \mathbb{I}-2\ket{1}\bra{1}^{\otimes n}
\end{align*}
for $n\ge 4$.
Indeed, this gate has a single nonzero phase: $\phi_{j,m}=\pi$ in the sector $j=n/2$, $m=-n/2$, while all other phases vanish.
Such a phase pattern cannot be generated using only the two free parameters associated with the global phase and global rotations around $z$ once more than two distinct angular-momentum sectors are present.

On the other hand, interestingly, the anti-controlled-Z gate,  
\begin{align}
    X^{\otimes n}\text{CZ}_{n-1} X^{\otimes n} \eq \mathbb{I} - 2|0\rangle\langle 0|^{\otimes n} \,,
\end{align}
satisfies the constraint in \cref{eq:const0_main}, and it is therefore realizable without breaking the U(1) symmetry.
(For this unitary, $\phi_{j,m}=0$ for all $j$ and $m$ except $m=j=n/2$, which corresponds to state $\ket{0}^{\otimes n}$.)

The asymmetry between the states $\ket{0}^{\otimes n}$ and $\ket{1}^{\otimes n}$, which forbids $\text{CZ}_{n-1}$ but allows $X^{\otimes n}\text{CZ}_{n-1} X^{\otimes n}$ is related to the fact that under time evolution of $\HTC$, the sum of $J_z$ and the number of excitations $a^\dag a$ in the bosonic mode is conserved. 
Additionally, the vacuum state $\ket{0}_{\text{osc}}$ has the minimum number of excitations in the bosonic mode. 
Consequently, when $J_z$ has its minimum value, it cannot change under $V_{\text{TC}}(r)$, meaning that for any $j$, state $\{\ket{j,m=-j}\otimes\vac\}$ is an eigenstate of $\HTC$ with a eigenvalue zero.
Therefore, it only acquires phases allowed under global rotations generated by Hamiltonian $J_z$.
More generally, the constraints in \cref{eq:const0_main} are a consequence of this restriction together with the permutational symmetry of the Hamiltonian, which implies that under $H_z(t)$, angular momentum $j$ remains conserved, i.e., $H_z(t)$ commutes with $J^2$.
\footnote{Note that the conservation of the angular momentum operator $J^2$  is not a consequence of SU(2) symmetry; indeed, this symmetry is broken in our system. Rather, it is a consequence of permutational symmetry.} 
Therefore, in each angular momentum $j$ subspace, the state with the lowest allowed value of $J_z$ will be affected by the above restriction.  

As one might expect from this discussion -- and as stated in \cref{thm:general_k_ancilla} -- the restrictions in \cref{eq:const0_main} disappear if instead of the vacuum, the oscillator is initialized in any other eigenstate of $a^\dag a$ with at least one excitation.
See \cref{app:theorem_proof} for details.

\subsection{The special case of two and three qubits}
In the special cases of $n=2$ and $n=3$ qubits, $j$ takes only two values: $j=0, 1$ for $n=2$, and $j=\sfrac{1}{2},\,\sfrac{3}{2}$ for $n=3$.
It follows that in both cases, \cref{eq:const0_main} does not impose any restrictions on the allowed phases $\phi_{j,-j}$; that is, they can take arbitrary values. 
Since, according to \cref{thm:general_k_ancilla}, these are the only restrictions on realizable PI, U(1)-invariant unitaries, we conclude that for $n=2$ and $n=3$, \textit{all} such unitaries are realizable in this way.

Indeed, in the case of $n=2$, we find explicit circuits and synthesis techniques for realizing arbitrary two-qubit gates that respect both U(1) and permutational symmetries.
From these circuits, we can determine the maximum required interaction time $\tau$ for which the qubits must interact with the oscillator via $\HTC$ -- that is, the total time the TC interaction must be active.
This can be quantified by $\tau := \int_0^T |f(t)| dt$, where $f(t)$ is the coefficient of $H_{\text{TC}}$ in Hamiltonian $H_z(t)$ in \cref{eq:totHz}, and $T$ is the total duration for implementing the desired unitary. 
In particular, $\tau$ will be the actual total time that the TC interaction is turned on, assuming $f(t)$ takes one of the three values $0$, $1,$ or $-1$.
\begin{theorem}[Two-qubit and three-qubit gates]
\label{thm:23_qubits}
Any PI, $J_z$-conserving unitary acting on $n=2$ or $n=3$ qubits can be realized exactly, up to a global phase, using $\HTC$ and $J_z$, with the bosonic mode initialized in and returned to an arbitrary eigenstate of $a^\dag a$.
Furthermore, when the bosonic mode is initialized in the vacuum, any such two-qubit unitary can be realized with total interaction time bounded by
\begin{align} \label{eq:time-bound}
    \tau := \int_0^T |f(t)|\,dt \,\,\leq\,\, 3.92 \times 2\pi \gTC^{-1}\,.
\end{align}
\end{theorem}
As we explained above, the first part of this theorem follows immediately from \cref{thm:general_k_ancilla}.
In \cref{sec:useful_2q_gate} and \cref{app:A_gate}, we demonstrate explicit circuit synthesis methods for implementing any such desired unitary, from which the time bound in \cref{eq:time-bound} follows.  
It is worth emphasizing that many useful gates -- such as controlled-Z (CZ), SWAP, and iSWAP -- can be realized with significantly shorter interaction times, as illustrated in \cref{tab:2Q_simple_times} and \cref{sec:useful_2q_gate}.
Furthermore, note that both CZ and iSWAP gates can transform a product state to a maximally entangled state. 
Therefore, using a CZ gate, for instance, one can generate maximally entangled states with total interaction time $\tau= 2.87 \times 2\pi \gTC^{-1}$.

\begin{table*}[htp]
 \centering
 \begin{adjustbox}{width=0.95\textwidth}
    \includegraphics[]{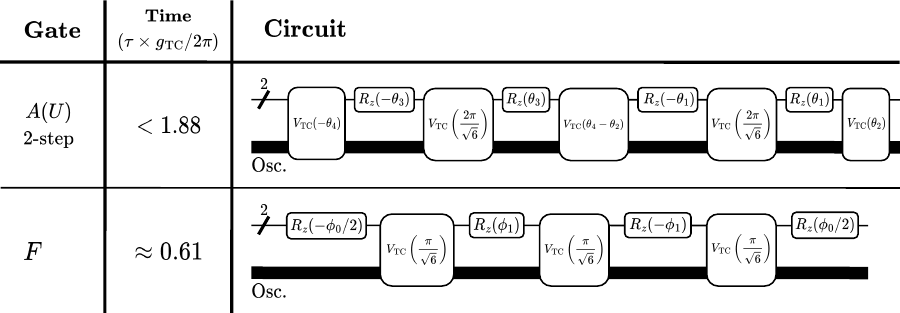}
 \end{adjustbox}
    \caption{\textbf{Circuits and (maximal) interaction times for $A(U)$ and $F$-gates:} Qubits are represented by the thin wire, while the bosonic mode (oscillator) is represented by the thicker wire.
    Fixed parameters for the $F$-gate are $\phi_1=\frac{1}{2}\arccos\left(\frac{7}{16}\right)$ and $\phi_0=\arctan\big(-\sqrt{23}/3\big)+\pi$.
    This $A(U)$ circuit represents a 2-step $A$-gate, from which any $A$-gate can be constructed, and angles $\theta_1,\theta_2,\theta_3,\theta_4$ depend on $U$. (See \cref{app:A_gate} for details.)}
    \label{tab:gadget_circuits}
\end{table*}
\subsection{Diagonal  PI $n$-qubit unitaries}
\label{sec:diag_unis}
As an important example of PI, $J_z$-conserving unitaries, we consider the important example of PI unitaries that are diagonal in the computational basis $\{\ket{0},\ket{1}\}^{\otimes n}$, which include the multi-controlled $Z$ gate, $\text{CZ}_{n-1}$.
Such unitaries which take the form
\begin{align}
    U &\eq \sum_{b_1,\dotsc, b_n=0}^1 \exp(i\phi(b_1,\dotsc, b_n))\,|b_1,\dotsc, b_n\rangle\langle b_1,\dotsc, b_n|\,,
\end{align}
where $b_j\in\{0,1\}$.
Permutational invariance means that $\phi(b_1,\dotsc, b_n)$ is uniquely determined by the sum $b_1+\dotsc+b_n$,  or equivalently, by the eigenvalue of $J_z$.
It follows that $U$ is a diagonal PI unitary if, and only if, it can be written as
\begin{align} \label{eq:diag_decomp}
    U \eq \sum_{m=-n/2}^{n/2} e^{i\phi_m} P_m\quad:\quad\phi_m\in[0,2\pi)\,, 
\end{align}
where
\begin{align}
    P_m &:= \sum_{j=|m|}^{n/2} P_{j,m} =\hspace{-3mm} \sum_{\substack{b_1,\dotsc,b_n\in\{0,1\}\\
    b_1+\cdots+b_n = n/2-m}}\hspace{-3mm}\ket{b_1,\dotsc,b_n}\bra{b_1,\dotsc,b_n}\,,
\end{align}
is the projector to the eigensubspace of $n$-qubit operator $J_z$ with eigenvalue $m\in[-n/2,\,n/2]$.
Here, we used the fact that in the subspace of $n$ qubits with total angular momentum $j$, i.e., the eigensubspace of $J^2$ with eigenvalue $j(j+1)$, $J_z$ takes eigenvalues $-j,\dotsc, j$, which means $j=|m|$ is the lowest value of $j$ such that $J_z$ has eigenvalue $m$ in that subspace.

\vspace{0.5em}
Applying the constraints in \cref{thm:general_k_ancilla} to the decomposition in \cref{eq:diag_decomp}, we prove in \cref{app:comp_basis_diag} the following corollary:
\begin{corollary}
\label{cor:diag}
A PI unitary $U$ that is diagonal in the computational basis is realizable using Hamiltonians $\HTC$ and $J_z$, with the bosonic mode initialized in and returned to the vacuum state $\vac$, if and only if there exist $\alpha\in[-\pi,\pi)$ and $\beta\in[-2\pi,2\pi)$ such that
\begin{align}
    \phi_m &=
    \begin{cases}
        \alpha+m\beta \,\,(\mathrm{mod}\,2\pi), & \text{for } m\leq 0, \\[6pt]
        \text{arbitrary}, & \text{for } m>0\,.
    \end{cases}
\label{eq:phi_m}
\end{align}
\end{corollary}
Again, we see an interesting asymmetry between negative and positive values of
$m$.
For $m>0$, the phases $e^{i\phi_m}$ are unrestricted, whereas for $m\leq 0$, they are determined by only two free parameters, $\alpha$ and $\beta$.
This asymmetry follows from the fact that, when $m\leq 0$, the phase $e^{i\phi_m}$ appears in the sector with $j=|m|$ and $m=-j$.
Hence it is subject to condition 2 of \cref{thm:general_k_ancilla}.

\section{Exact Pulse Sequences for 2-qubit PI gates}
\label{sec:useful_2q_gate}
In this section, we focus on the important case of $n=2$ qubits and introduce explicit pulse sequences for implementing all PI, U(1)-invariant (also known as energy-conserving) two-qubit gates using the TC interaction and global $z$ field.
For the remainder of this section, we assume that the bosonic mode is initialized in and returned to the vacuum state $\vac$, as in \cref{eq:ancilla} with $k=0$.
As examples, we present novel implementations of controlled-Z (CZ), SWAP, iSWAP, and $\sqrt{\text{iSWAP}}$ gates, and time evolution of the ZZ interaction.

We note that \cite{GomezRosas_etal_2021} demonstrates the approximate realization of two-qubit entangling operations via the TC interaction, albeit assuming that the oscillator is in a coherent state with a large expected number of quanta (e.g. 100 phonons).

\subsection{Two useful gadgets}
To achieve this, we first introduce two useful circuit modules that can be used to construct more complex circuits.
The first module is what we call an $A$-type gate: for any $2\times 2$ unitary $U\in\SU(2)$, $A(U)$ is an alternating sequence of at most seven $\UTC$ and six $\UZ$ gates that together realize the unitary $U$ in the 2D subspace spanned by
\begin{align}\label{eq:2D}
    \left\{\ket{00}\otimes \ket{1}_{\text{osc}}\,\,,\,\,\ket{\Psi^+}\otimes \vac\right\}\,,
\end{align}
whereas it acts trivially in the 3D subspace spanned by
\begin{align} \label{eq:3D}
    \left\{\ket{00}\otimes \vac\,\,,\,\, \ket{\Psi^+}\otimes \ket{1}_{\text{osc}} \,\,,\,\,  \ket{11}\otimes \ket{2}_{\text{osc}}\right\}\,,
\end{align}
where
\begin{align}
    \ket{\Psi^{\pm}}=\frac{|01\rangle\pm|10\rangle}{\sqrt{2}}\,.
\end{align}
Note that the subspaces in \cref{eq:2D} and \cref{eq:3D} are, respectively, the eigen-subspaces with eigenvalues $q=1$ and $q=2$ of the \textit{charge operator} $Q=a^\dag a+J_z+1$, which corresponds to the total number of \emph{excitations} in the system.
Furthermore, both subspaces live in the symmetric subspace of two qubits, i.e. the subspace corresponding to $j=1$, also known as the triplet subspace. 
Similar to all other unitaries that can be realized with $V_{\text{TC}}$ and $R_z$ gates alone, $A(U)$ gates act trivially in the anti-symmetric subspace of two qubits, corresponding to $j=0$, i.e. the singlet state $\ket{\Psi^-}$.

In particular, in \cref{app:A_gate} we present a construction that realizes $A(U)$ with an interaction time that depends on the Bloch sphere rotation angle associated with the unitary $U$.
Specifically, for $U = \exp(-i {\theta}\vec{\sigma} \cdot \hat{n} / 2)$, where $\hat{n} \in \mathbb{R}^3$ is an arbitrary unit vector, if $|\theta| \le 0.62\pi$, then $A(U)$ can be realized using the circuit given in \cref{tab:gadget_circuits}, for appropriately chosen parameter values, and with the total interaction time bounded by approximately $1.88 \times 2\pi\gTC^{-1}$. 
For arbitrary rotation angles $|\theta|$, the maximal required interaction time is $\approx2.69\times2\pi\gTC^{-1}$.
(See \cref{app:A_gate} for further details.) 

\begin{table*}[htp]
 \centering
 \begin{adjustbox}{width=0.75\textwidth}
    \includegraphics[]{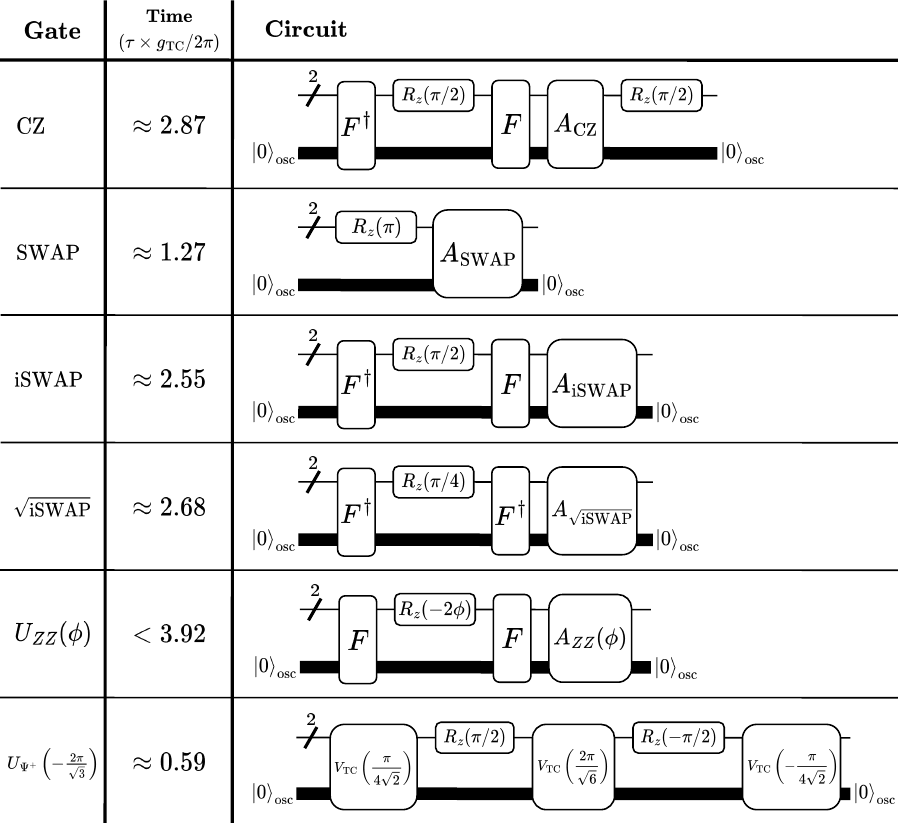}
 \end{adjustbox}
    \caption{Circuits and interaction times of six useful 2-qubit gates.
    For SWAP, the required $A$-gate is precisely a circuit of the form depicted in \cref{tab:gadget_circuits}, whereas the other $A$-gates are variations on this circuit.
    (See \cref{app:A_gate} for details.)
    The $A$-gate used $U_{ZZ}(\phi)$ has a variable interaction time that depends on $\phi$. Note that these circuits implement SWAP, iSWAP, $\sqrt{\text{iSWAP}}$, and $U_{ZZ}(\phi)$ up to global phases ($-1$, $i$, $e^{i\pi/4}$, and $e^{-i\phi}$, respectively).}
    \label{tab:circuit_table}
\end{table*}

The second useful circuit module, which will be called an $F$-gate, is realized via the circuit in \cref{tab:gadget_circuits}.
In contrast to $A$-gates introduced above, an $F$-gate acts non-trivially in the 3D subspace in \cref{eq:3D} by transferring two \emph{excitations} between the qubits and the bosonic mode:
\begin{align*}
    \ket{00}\otimes \vac\,  \xleftrightarrow{\quad F\quad }\,\ket{11}\otimes \ket{2}_{\text{osc}}\,.
\end{align*}
More precisely, under the action of $F$,
\begin{align}
 \label{eq:F_map}
 \begin{split}
    F\big(\ket{00}\otimes\vac\big)&=\ket{11}\otimes \ket{2}_{\text{osc}} \,,\\
    F \big(\ket{11}\otimes \ket{2}_{\text{osc}}\big)&= \ket{00}\otimes \vac\,,\\
    F\big(|\Psi^-\rangle\otimes\vac\big)&=|\Psi^-\rangle\otimes\vac \,, \\
    F\big(\ket{11}\otimes\vac\big)&=\ket{11}\otimes\vac\,.
 \end{split}
\end{align}
Here, the last two equations follow from the fact that $|\Psi^-\rangle \otimes \vac$ and $\ket{11} \otimes \vac$ are eigenstates of $\HTC$ with eigenvalue zero.
Roughly speaking, the first equation indicates that by applying $F$, we can transfer two excitations from the qubits to the bosonic mode, effectively shielding them from subsequent unitary transformations on the qubits.
We use this intuition in the next section to construct more general unitary transformations (see \cref{eq:seq2}).

One can see from the circuit in \cref{tab:gadget_circuits} that the total interaction time needed to implement an $F$-gate, i.e., the total time that the TC interaction must be turned on, is
\begin{align}
    \tau_F = \frac{3\pi}{\sqrt{6}}\times\gTC^{-1} \,\approx\, 0.612 \times 2\pi \gTC^{-1}\,.
\end{align}
It is worth noting that the $F$-gate also acts non-trivially in the 2D sector in \cref{eq:2D}.
This, in particular, means that the initial state $\ket{\Psi^+}\otimes \vac$ will be transformed to a state in which qubits are entangled with the bosonic mode.
This undesirable effect can be canceled by applying an appropriate $A$-gate. (See \cref{app:F_gate} for further details.)

\subsection{Implementing all 2-qubit PI, $J_z$-conserving unitaries}
Interestingly, all 2-qubit PI, $J_z$-conserving unitaries, such as the CZ, SWAP, iSWAP, and $\sqrt{\text{iSWAP}}$ gates, can be realized using two $F$ and/or $F^{\dag}$ gates, an appropriate $A$-type gate, and two $R_z$ gates.
To show this, first consider the sequence
\begin{align} \label{eq:F_combo}
    FR_z(\theta)F\,.
\end{align}
Thanks to the convenient form of the action of the $F$ gate in \cref{eq:F_map}, it is straightforward to see that this sequence implements the transformation
\begin{align}\label{eq:seq2}
 \begin{split}
  \renewcommand{\arraystretch}{1.2}
  \begin{array}{rlr}
    \ket{11}\otimes\vac &\longrightarrow &e^{i\theta}\ket{11}\otimes\vac\\
    \ket{00}\otimes\vac &\longrightarrow &e^{i\theta}\ket{00}\otimes\vac
    \\\ket{\Psi^-}\otimes\vac &\longrightarrow &\ket{\Psi^-}\otimes\vac\,.
  \end{array}
 \end{split}
\end{align}
In other words, it applies the same phase $e^{i\theta}$ to $\ket{00}\otimes\vac$ and $\ket{11}\otimes\vac$, which would otherwise obtain conjugate phases under $R_z(\theta)$.
\footnote{Note that $F$ and $F^{\dag}$ both implement the transformations in \cref{eq:F_map}, and therefore they are interchangeable in \cref{eq:F_combo}, meaning any combinations of $F$ and $F^{\dagger}$ in \cref{eq:F_combo} produces \cref{eq:seq2}.
However, different combinations transform state $\ket{\Psi^+}\otimes\vac$ differently, which means the $A$-gate required to remove this effect will be different. 
Therefore, this choice is relevant for determining the most optimal $A$-gate.
(See \cref{app:minimizing_2qubit_unis} for details.)}

On the other hand, under the action of $F R_z(\theta) F$ on the initial state $\ket{\Psi^+} \otimes \vac$,  the qubits become entangled with the bosonic mode.
This entanglement can be removed using a proper $A$-type gate.
In particular, for arbitrary phase $\theta_+\in[0,2\pi)$, we construct gate $ A(\theta, \theta_+)$ such that $ A(\theta, \theta_+)F R_z(\theta) F$ transforms this initial state to
\begin{align}
    \ket{\Psi^+}\otimes\vac\longrightarrow e^{i\theta_+}\ket{\Psi^+}\otimes\vac\,\,.
\end{align}
Note that since the $A$-type gate acts trivially on all states in \cref{eq:seq2}, the effect of $A(\theta, \theta_+) F R_z(\theta) F$ on these states is also described by \cref{eq:seq2}.
Therefore, for instance by choosing $\theta_+=0$, we realize the unitary $U_{ZZ}(\theta)=\exp(-i{\theta} Z_1 Z_2)$, the time evolution of a ZZ interaction.

In \cref{eq:seq2}, the phases $e^{i\theta}$ obtained by states $\ket{11}\otimes\vac$ and $\ket{00}\otimes\vac$ are identical. 
To make these phases independent we can apply $R_z(\theta')$, which gives these states opposite phases, namely $e^{i\theta'}$ and $e^{-i\theta'}$, respectively.
It follows that by applying the sequence 
\begin{align}
 \label{eq:2qubit_decomp}
    R_z(\theta') A(\theta, \theta_+) F R_z(\theta) F\,,
\end{align}
we can realize any arbitrary 2-qubit PI, U(1)-invariant unitary, up to a global phase.
Recall that such unitaries are characterized by \cref{eq:charU}, and up to a global phase, they are specified by 3 phases in the 2-qubit case.
In particular, \cref{eq:2qubit_decomp} implements the transformation:
\begin{align}\label{eq:2q_gate_map}
 \begin{split}
  \renewcommand{\arraystretch}{1.2}
  \begin{array}{rlr}
    \ket{11}\otimes\vac &\longrightarrow &e^{i(\theta+\theta')}\ket{11}\otimes\vac\\
    \ket{\Psi^+}\otimes\vac &\longrightarrow &e^{i\theta_+}\ket{\Psi^+}\otimes\vac\\
    \ket{00}\otimes\vac &\longrightarrow &e^{i(\theta-\theta')}\ket{00}\otimes\vac
    \\\ket{\Psi^-}\otimes\vac &\longrightarrow &\ket{\Psi^-}\otimes\vac\,.
  \end{array}
 \end{split}
\end{align}
We conclude that an arbitrary PI, U(1)-invariant two-qubit unitary $U$ can be realized, up to a global phase, using two $F$ and/or $F^{\dag}$ gates and one $A$-gate, with the maximum interaction time
\begin{align}
 \begin{split}
    \tau_{A} + 2\tau_{F}&\,\lesssim \,  3.92\times2\pi\gTC^{-1}\,.
 \end{split}
 \label{eq:2qubit_total_time2}
\end{align}
\Cref{tab:circuit_table} illustrates explicit circuits for several gates, including CZ, SWAP, iSWAP, and
\begin{align}
 \sqrt{\text{iSWAP}}&:= \exp\left(\frac{i\pi}{8}[X \otimes X + Y \otimes Y]\right)\,,
\end{align}
and gives the corresponding total interaction times.

\subsection{Fastest entangling gate?}
An interesting question is determining the minimum time required to implement an entangling unitary transformation on the qubits such that, at the end of the process, the bosonic mode is no longer entangled with the qubits. 
Here, we provide a candidate for such gates.
In particular, the unitary $U_{\Psi^+}(-2\pi/\sqrt{3})=\exp\big(i\pure{\Psi^+}2\pi/\sqrt{3}\big)$ is realized by the corresponding circuit in \cref{tab:circuit_table} with total interaction time 
\begin{align}
    \tau \approx 0.59 \times 2\pi{\gTC}^{-1}\,,
\end{align}
which is much shorter than the times required by the general methods discussed above.

\section{State Preparation Protocol for a Bosonic Mode}
\label{sec:oscillator_state_prep}
While the main focus of this paper is the implementation of PI gates on qubits, with the bosonic mode used as an ancillary system, the same tools also lead to a natural state-preparation protocol for the bosonic mode.
This protocol combines two ingredients: first, the ability to prepare arbitrary states in the totally symmetric subspace of the qubits, as shown in \cref{sec:entanglement}; and second, a state-transfer gadget introduced in our companion paper \cite{theory_paper}, which swaps a state encoded in the symmetric qubit subspace with the corresponding state of the bosonic mode.

More precisely, this gadget, which can be realized using only Hamiltonians $\HTC$ and $J_z$, implements the transformation
\begin{align}
    \Vswap\big(\ket{\psi}\otimes\vac\big)
    \eq
    \ket{1}^{\otimes n}\otimes\ket{\Psi}_{\rm osc}\, .
\end{align}
Here $\ket{\psi}\in(\mathbb{C}^2)^{\otimes n}$ is an arbitrary qubit state in the totally symmetric subspace, and $\ket{\Psi}_{\rm osc}$ is the corresponding state of the bosonic mode.
This correspondence is defined by expanding $\ket{\psi}$ in the Dicke basis, as follows.

Consider the Dicke basis, i.e., the eigenvectors of $J_z$ in the totally symmetric subspace,
\begin{align} \label{eq:Dicke} 
    \big|D_{n/2-m}^{(n)}\big\rangle
    &:=
    \binom{n}{\frac{n}{2}-m}^{-1/2}
    \hspace{-12pt}\sum_{\substack{b_1,\dotsc,b_n=0,1 \\[2pt]
    b_1+\dotsb+b_n = \frac{n}{2}-m}}
    \ket{b_1,\dotsc,b_n}\,,
\end{align}
for $m=-n/2,\ldots,n/2$.
In other words, $\big|D_{n/2-m}^{(n)}\big\rangle$ is the normalized symmetric superposition of all computational basis states with $\frac{n}{2}-m$ qubits in state $\ket{1}$.
Thus, an arbitrary state in the symmetric subspace can be written as
\begin{align}
    \ket{\psi}
    =
    \sum_{m=-n/2}^{n/2} c_m
    \big|D_{n/2-m}^{(n)}\big\rangle\,\,:\,\,
    \sum_{m=-n/2}^{n/2}|c_m|^2=1 .
\end{align}
The corresponding oscillator state is then
\begin{align}
    \ket{\Psi}_{\rm osc} \eq \sum_{m=-n/2}^{n/2} c_m \ket{n/2-m}_{\rm osc} \eq T\ket{\psi}\,,
\end{align}
where $\ket{k}_{\rm osc}$ denotes the $k$-excitation eigenstate of $a^\dag a$ of the bosonic mode, and
\begin{align}
    T \eq \sum_{m=-n/2}^{n/2}
    \ket{n/2-m}_{\rm osc}
    \big\langle D_{n/2-m}^{(n)}\big|
\end{align}
is an isometry from the totally symmetric subspace to the bosonic Hilbert space.

Thus, by first preparing the desired symmetric qubit state $\ket{\psi}$ and then applying $\Vswap$, one can prepare an arbitrary bosonic state supported on the $(n+1)$-dimensional subspace spanned by the Fock states with at most $n$ excitations.
In summary, combining $\Vswap$ with \cref{cor:state_prep} provides a method for preparing the bosonic mode in an arbitrary pure state with up to $n$ excitations, e.g., phonons in the case of motional oscillators.
Precisely,
\begin{corollary}
    \textbf{(Preparing bosonic states)}
    For any state $\ket{\Psi}$ supported on the lowest $(n+1)$ energy levels of
    the bosonic mode, and any state $\ket{\phi}$ supported on the
    $(n+1)$-dimensional symmetric subspace of the qubits, e.g., $|0\rangle^{\otimes n}$ there exists a unitary
    $V\in\V_{z-x}$ such that
    \begin{align}
        V\big(\ket{\phi}\otimes\vac\big)
        =
        \ket{1}^{\otimes n}\otimes\ket{\Psi}_{\rm osc}\, .
    \end{align}
    \label{cor:state_prep2}
\end{corollary}
This state-preparation protocol is exact and consists solely of unitary time evolution under the Hamiltonians $\HTC$, $J_z$, and $J_x$.

\Cref{fig:swap_circuit} shows an explicit construction for the 2-qubit $\Vswap$ gate, with further details provided in \cref{app:state_prep_circuit}.
We note that prior works have studied other methods for preparing arbitrary states \cite{Law_Eberly_1996} and implementing arbitrary unitaries acting on finite subspaces of the bosonic mode's energy levels \cite{Santos_2005,Strauch_2012,Mischuck_Molmer_2013,Heeres_etal_2015,Krastanov_etal_2015,Fosel_etal_2020,Liu_etal_2021}, using a single-qubit coupling.
For example, \cite{Liu_etal_2021} describes a method for implementing arbitrary unitaries on a subspace of the oscillator with a finite number of excitations, 
using a Jaynes-Cummings interaction and $x$ field. 
(See \cite{Liu_etal_2021} for a discussion of these existing protocols.)

\begin{figure}[htp]
 \centering
 \begin{adjustbox}{width=0.7\columnwidth}
    \includegraphics[]{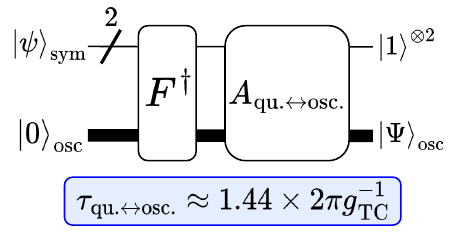}
 \end{adjustbox}
    \caption{Quantum circuit implementing the 2-qubit $\Vswap$ gate. The $A_{\text{qu.}\leftrightarrow\text{osc.}}$-gate is implemented using a circuit of the form given for $A(U)$ in \cref{tab:gadget_circuits}, with parameters listed in \cref{tab:A_gate_params}.}
    \label{fig:swap_circuit}
\end{figure}

\section{Summary \& Discussion}
In this paper, we studied the realization of permutation-invariant multi-qubit gates using the Tavis-Cummings interaction.
In particular, we showed that \textit{all} $n$-qubit unitaries respecting permutation symmetry can be realized by coupling the qubits to a bosonic mode initialized in and returned to its vacuum state, using the TC interaction together with global $z$ and $x$ fields that act identically on all qubits (\cref{thm:mainThm}).
This controllability result provides a powerful framework for realizing collective entangling operations, which are widely used in quantum computing.
An immediate and useful consequence is that these interactions can be used to prepare arbitrary pure states within the symmetric subspace of the qubits.
Combined with the state-transfer gadget introduced in our companion paper, this also leads to a natural protocol for preparing arbitrary states within the lowest $(n+1)$ energy levels of the bosonic mode.

PI $n$-qubit gates arise naturally in various applications, including variational quantum machine learning \cite{meyer2023exploiting, nguyen2022theory, sauvage2022building, zheng2023speeding} and symmetry-preserving variational quantum eigensolvers for quantum chemistry \cite{barron2021preserving}.
Our results show that such gates need not be synthesized from generic two-qubit building blocks, but can instead be implemented directly using a collective qubit-oscillator interaction.
This is especially relevant in the presence of decoherence, where direct implementations of many-body gates can be substantially more efficient than decompositions into universal two-qubit gates.
Related direct implementations of $n$-body gates have recently been developed using trapped ions \cite{Katz_etal_2022,Katz_etal_2023,Haffner_etal_2005} and atoms in optical cavities \cite{Goto_Ichimura_2004,Luo_etal_2024_3-4body}.

In the special case of $n=2$ qubits, we presented explicit quantum circuits realizing \textit{all} PI U(1)-invariant unitaries using only the TC interaction and a global $z$ field; these circuits consist of alternating sequences of unitaries generated by the Hamiltonians $\HTC$ and $J_z$.
The total time for which the TC interaction is turned on is bounded and is on the order of $2\pi\gTC^{-1}$, the inverse coupling strength.
We also provided explicit examples of circuits implementing useful entangling gates, including CZ, SWAP, iSWAP, and $\sqrt{\mathrm{iSWAP}}$.

Since these methods require only collective control over the qubits, they are potentially easier to implement and more efficient than traditional decompositions of multi-qubit gates into fixed entangling gates, such as CNOT, which break permutation symmetry.
More broadly, these results suggest that bosonic modes can serve not only as ancillas for few-qubit gate synthesis, but also as useful resources for collective, symmetry-adapted quantum operations.

\section*{Acknowledgments}
This work was supported in part by a collaboration between the U.S. Department of Energy and other agencies.
This material is based upon work supported by the U.S. Department of Energy, Office of Science, National Quantum Information Science Research Centers, Quantum Systems Accelerator (Award No. DE-SCL0000121).
Additional support is acknowledged from DOE Office of Advanced Scientific Computing Research, under Award No. DE-SC0026321, NSF PHY-2046195, NSF QLCI grant OMA-2120757, and ARL-ARO QCISS Grant number W911NF-21-1-0005.
We thank David Jakab for helpful discussions during the early stages of this project.

\bibliography{main_bib}

\newpage 
\onecolumngrid
\appendix

\newpage
\newcommand\appitemtwo[2]{{\textbf{\cref{#1} \nameref*{#1}}} \dotfill \pageref{#1}\\ \begin{minipage}[t]{0.9\textwidth} #2\end{minipage}}
\newcommand\appitem[1]{\hyperref[{#1}]{\textbf{\cref{#1}}} \textbf{\nameref*{#1}}
\dotfill \pageref{#1}\vspace{5pt}}
\newcommand\subappitem[1]{{\textbf{\ref{#1}.} \nameref*{#1}} \hspace*{\fill} \pageref{#1}}
\makeatletter
\newcommand{\appsec}[2]{
  \section{#1}
  \def\@currentlabelname{#1}
  \def\@currentlabel{\thesection}
  \label{#2}
  \addcontentsline{toc}{section}{#1}
}
\newcommand{\appsubsec}[2]{
  \subsection{#1}
  \def\@currentlabelname{#1}
  \def\@currentlabel{\thesubsection}
  \label{#2}
}
\newcommand{\appsubsubsec}[2]{
  \refstepcounter{subsubsection}
  \subsubsection{#1}
  \addtocounter{subsubsection}{-1}
  \def\@currentlabelname{#1}
  \def\@currentlabel{\thesubsubsection}
  \label{#2}
}
\makeatother

\section*{Appendix: Table of Contents}
\begin{itemize}[label={}]
\item \appitem{app:schur_weyl_basis}
\item \appitem{app:theorem_proof}
\item \appitem{app:comp_basis_diag}
\item \appitem{app:two_qubit_circuits}
\subitem \subappitem{app:2q_circuits_basics}
\subitem \subappitem{app:A_gate}
\subitem \subappitem{app:Agate_optimization}
\subitem \subappitem{app:minimizing_2qubit_unis}
\subitem \subappitem{app:useful_gate_table}
\subitem \subappitem{app:A_gate_params}
\subitem \subappitem{app:F_gate}
\subitem \subappitem{app:F_gate_geometry}
\item \appitem{app:state_prep_circuit}
\end{itemize}

\newpage
\appsec{Schur-Weyl duality \& a basis for the combined qubit-bosonic Hilbert space}{app:schur_weyl_basis}
First, we briefly discuss the decomposition of the $n$-qubit Hilbert space into irreducible representations (irreps) of SU(2) and the symmetric group $\Sn$, and we define a suitable basis for the combined qubit-bosonic Hilbert space.
Consider a system of $n$ qubits, which is described by the Hilbert space $(\C^2)^{\otimes n}$.
Single qubit rotations acting on all qubits identically via
\begin{align}
    U\otimes U\otimes\dotsc\otimes U \eq U^{\otimes n}\,\,:\,\,U\in\SU(2)
\end{align}
define a natural representation of SU(2) on $(\C^2)^{\otimes n}$.
Furthermore, this action is PI; it commutes with the group generated by all 2-qubit SWAP unitaries,
\begin{align}
    \Big\langle\textbf{P}_{i,j}\,:\,i,j\in\{1,\,\dotsc,\,n\} \Big\rangle\,,
\end{align}
which defines a representation of $\Sn$ on $(\C^2)^{\otimes n}$.

\vspace{0.5em}
The Schur-Weyl duality \cite{schur_1927,weyl_1939, goodman2009symmetry,harrow_2005} is the fact that there is a canonical one-to-one correspondence between irreps of SU(2) and irreps of $\Sn$ appearing in $(\C^2)^{\otimes n}$.
In particular, this Hilbert space decomposes as
\begin{align} \label{eq:schur_weyl_decomp}
    (\C^2)^{\otimes n} \,\,\cong\,\, \bigoplus_{j=j_{\text{min}}}^{n/2} \big(\C^{M(n,j)}\otimes\C^{2j+1}\big)\,,
\end{align}
where each $\C^{2j+1}$ (for different $j$) factor is an SU(2) irrep appearing with multiplicity $M(n,j)$, and each $\C^{M(n,j)}$ factor is an irrep of $\Sn$ appearing with multiplicity $2j+1$.
Every PI operator is not only block-diagonal with respect to this decomposition, but acts only on the SU(2) irrep $\C^{2j+1}$.
Therefore, the group of all PI $n$-qubit unitaries is
\begin{align} \label{eq:n_qubit_PI}
    \Big\langle U_{\text{PI}}\in\mathcal{L}\big((\mathbb{C}^2)^{\otimes n}\big)\,:\,U^{\dag}U=UU^{\dag}=\1,\,\,[U_{\text{PI}},\textbf{P}_{i,j}]=0\quad\forall\,i\neq j\in\{1,\,\dotsc,\,n\}\Big\rangle \,\,\cong\,\,\bigoplus_{j=j_{\text{min}}}^{n/2} \1_{M(n,j)}\otimes\UU(2j+1)\,,
\end{align}
where $\mathcal{L}\big((\mathbb{C}^2)^{\otimes n}\big)$ is the space of linear maps on $(\C^{2})^{\otimes n}$, and $\UU(2j+1)$ is the unitary group acting on $\C^{2j+1}$.

\vspace{1em}
It is useful to use a basis for the combined qubit-bosonic Hilbert space, $(\C^2)^{\otimes n}\otimes \mathcal{L}^2(\R)$, that respects the block structure in \cref{eq:schur_weyl_decomp}.
Namely, consider the simultaneous eigenstates of $J^2$, $J_z$, and $a^{\dag}a$:
\begin{align}
    \ket{j,m,\al}\otimes\kosc \qquad\begin{cases}j&=\jmin,\dotsc,\frac{n}{2}\\[3pt]
    m&=-j,\dotsc,j\\[3pt]
    \al&=1,\dotsc,M(n,j)\\[3pt]
    k&=0,\dotsc,\infty
    \end{cases}\,,
 \label{eq:basis1}
\end{align}
where
\bes
\begin{align}
    J^2\ket{j,m,\alpha}&=j(j+1) \ket{j,m,\alpha}\\[3pt] 
    J_z \ket{j,m,\alpha}&=m \ket{j,m,\alpha}\\[3pt]
    a^{\dagger}a\kosc &= k\kosc\,.
\end{align}
\label{eq:basis2}
\ees
With respect to the $\ket{j,m,\al}$ basis for $\H_{\text{qubits}}=(\C^2)^{\otimes n}$, we can define the projector to the common eigen-subspace of $J^2$ and $J_z$ within with corresponding eigenvalues $j(j+1)$ and $m$,
\begin{align} \label{eq:Pjm}
    P_{j,m} \,:=\, \sum_{\al}\pure{j,m,\al}\,.
\end{align}
With respect to the decomposition in \cref{eq:schur_weyl_decomp}, we denote this projector as
\begin{align} \label{eq:Pjm2}
    P_{j,m} \eq \1_{M(n,j)}\otimes\pure{j,m}\,,
\end{align}
where $\ket{j,m}$ denotes the shared eigenstate of $J^2$ of $J_z$, with corresponding eigenvalues $j(j+1)$ and $m$, within a fixed but arbitrary SU(2) irrep labeled by $j$.

\subsection{PI, U(1)-invariant $n$-qubit unitaries}
In \cref{sec:TC_and_z}, we claimed that the group of $n$-qubit PI, $J_z$-conserving unitaries is precisely the set of all unitaries in the form
\begin{align}
    U \eq \sum_{j=j_{\min}}^{n/2}\sum_{m=-j}^je^{i\phi_{j,m}} P_{j,m}\quad:\quad\phi_{j,m}\in[0,2\pi)\,,
 \label{eq:charU_app}
\end{align}
where $P_{j,m}$ is the projector defined above.
This is a consequence of Schur-Weyl duality; which as we explained in the previous section (see \cref{eq:n_qubit_PI}), implies that that an arbitrary PI unitary $U_{\text{PI}}$ is in the form
\begin{align}
    U_{\text{PI}} \eq \bigoplus_{j=j_{\text{min}}}^{n/2} \1_{M(n,j)}\otimes(U_{\text{PI}})_j\,,
\end{align}
where $(U_{\text{PI}})_j$ is a unitary acting on the SU(2) irrep $\C^{2j+1}$.
Then, if $U_{\text{PI}}$ also commutes with $J_z$, which is non-degenerate within each SU(2) irrep, it is also diagonal with respect to the eigenbasis of $J_z$ defined in \Cref{eq:basis1,eq:basis2}.
In other words, $(U_{\text{PI}})_j$ commutes with $\pure{j,m}$ for each $m=-j,\dotsc,\,j$.
This means the components in each $j$ sector are in the form
\begin{align}
    \1_{M(n,j)}\otimes(U_{\text{PI}})_j \eq \sum_{m=-j}^{j} e^{i\phi_{j,m}}P_{j,m} \quad:\quad\phi_{j,m}\in[0,2\pi)\,,
\end{align}
from which the form in \cref{eq:charU_app} follows immediately.

\newpage
\appsec{Proof of \cref{thm:mainThm}}{app:theorem_proof} Here, we prove \cref{thm:mainThm} using \cref{thm:general_k_ancilla}, which was adapted from our companion paper \cite{theory_paper}, together with the following lemma.

\begin{lemma}
Any PI unitary $U$ on $n$ qubits can be realized as a finite sequence of global rotations around the $x$-axis and PI unitaries that also conserve $J_z$.
\label{lem:PI_with_Jx}
\end{lemma}
\begin{proof}
Recall that the group of PI $n$-qubit unitaries is
\begin{align}
    \Big\{
    U\in \mathcal{L}\big((\mathbb{C}^2)^{\otimes n}\big)\,:\,U^{\dag}U=UU^{\dag}=\1,\,\,
    [U,\mathbf{P}_{i,j}]=0
    \quad \forall i\neq j
    \Big\}.
\end{align}
Using the Schur-Weyl decomposition in \cref{eq:schur_weyl_decomp}, its Lie algebra is
\begin{align}
    \mathfrak{h}
    \,:=&\,\,
    \left\{
    A\in \mathcal{L}\big((\mathbb{C}^2)^{\otimes n}\big):
    A+A^\dag=0,\,
    [A,\mathbf{P}_{i,j}]=0
    \quad \forall i\neq j
    \right\} \notag\\[6pt]
    \cong&\,\,
    \bigoplus_{j=\jmin}^{n/2}
    \1_{M(n,j)}\otimes \mathfrak{u}(2j+1).
\end{align}
Here $M(n,j)$ denotes the multiplicity space of the spin-$j$ irrep, and $\mathfrak{u}(2j+1)$ denotes the Lie algebra of skew-Hermitian operators on the spin-$j$ irrep.

The subgroup of PI unitaries that also conserve $J_z$ consists of the diagonal unitaries
\begin{align}
    U
    =
    \sum_{j=\jmin}^{n/2}
    \sum_{m=-j}^{j}
    e^{i\phi_{j,m}} P_{j,m}\quad:\quad
    \phi_{j,m}\in[0,2\pi)\,,
\end{align}
where $P_{j,m}$ is defined in \cref{eq:Pjm,eq:Pjm2}.
Equivalently, its Lie algebra is
\begin{align}
   \mathfrak{h}_z := \Big\{A\in\mathfrak{h}:[A,J_z]=0\Big\} \eq 
   \s_{\mathbb{R}}\left\{iP_{j,m}\,:\, j=\jmin,\dotsc,\frac n2,\,m=-j,\dotsc,j\right\}\,.
\end{align}
In particular, $\mathfrak{h}_z$ contains $iJ_z$.

We now show that
\begin{align}
    \mathfrak{h} \eq \alg_{\R}\big\{\mathfrak{h}_z,\,iJ_x\big\}\,.
\end{align}
Since $iJ_z\in\mathfrak{h}_z$, adding $iJ_x$ also gives
\begin{align}
    iJ_y=[iJ_z,iJ_x]\,.
\end{align}
On each spin-$j$ irrep, the operators $J_x$ and $J_y$ have the form
\begin{align}
    iJ_x &\eq i\sum_{j=\jmin}^{n/2}\1_{M(n,j)}\otimes\sum_{m=-j}^{j-1}a_{j,m}\big(\ket{j,m+1}\bra{j,m}+\ket{j,m}\bra{j,m+1}\big)\,,\\[4pt]
    iJ_y&\eq\sum_{j=\jmin}^{n/2}\1_{M(n,j)}\otimes\sum_{m=-j}^{j-1}a_{j,m}\big(\ket{j,m+1}\bra{j,m}-\ket{j,m}\bra{j,m+1}\big),
\end{align}
where
\begin{align}
    a_{j,m} := \frac12\left\langle j,m+1 \middle| J_+ \middle| j,m \right\rangle \eq \frac12\sqrt{(j-m)(j+m+1)}\,.
\end{align}
For all $m=-j,\dotsc,j-1$, these coefficients are nonzero.

Because $\mathfrak{h}_z$ contains the diagonal projectors $iP_{j,m}$, commutators with these projectors isolate the nearest-neighbor transitions
\begin{align}
    \1_{M(n,j)}\otimes \big(\ket{j,m+1}\bra{j,m} - \ket{j,m}\bra{j,m+1} \big),
\end{align}
and
\begin{align}
    i\1_{M(n,j)}\otimes\big(\ket{j,m+1}\bra{j,m} + \ket{j,m}\bra{j,m+1}\big)\,,
\end{align}
for each fixed $j$ and each $m=-j,\dotsc,j-1$.
Thus, for every spin sector $j$, the generated Lie algebra contains all diagonal skew-Hermitian matrices as well as all nearest-neighbor skew-Hermitian matrices on the chain
\[ \ket{j,-j},\,\ket{j,-j+1},\,\dotsc,\,\ket{j,j}\,.\]
It can be easily seen that since this chain is connected, these nearest-neighbor generators generate all matrix units on the spin-$j$ irrep.
Hence they generate
\begin{align}
    \1_{M(n,j)}\otimes\mathfrak{u}(2j+1)
\end{align}
for each $j$.
Therefore,
\begin{align}
    \alg_{\R}\big\{\mathfrak{h}_z,\,iJ_x\big\} \eq 
    \bigoplus_{j=\jmin}^{n/2} \1_{M(n,j)}\otimes\mathfrak{u}(2j+1) \eq \mathfrak{h}.
\end{align}

It follows that the connected Lie group generated by the two subgroups
\[
    \exp(i \theta J_x): \theta\in[0,4\pi)\,
    \quad\text{and}\quad\,
    \exp(\mathfrak{h}_z)
\]
is the full group of PI unitaries.
Since this group is connected, every PI unitary can therefore be written as a finite product of global $x$ rotations and PI unitaries that commute with $J_z$. This proves the lemma.
\end{proof}

\vspace{4mm}
Using \cref{thm:general_k_ancilla} and \cref{lem:PI_with_Jx}, we can now prove:\\\\
\textbf{Theorem\,\,\ref{thm:mainThm}.} All PI unitary transformations on $n$ qubits can be realized, up to a global phase, using the Hamiltonian $H_{z-x}(t)$ in \cref{eq:totH}, i.e., uniform global $z$ and $x$ fields $J_z$ and $J_x$ and the Tavis-Cummings interaction $H_{\text{TC}}$, where the bosonic mode (oscillator) is initialized in and returned to an arbitrary eigenstate $\ket{k}_{\text{osc}}$ of the intrinsic oscillator Hamiltonian $a^{\dag}a$.

\begin{proof}(\cref{thm:mainThm})
First, consider PI, U(1)-invariant unitaries realized using Hamiltonians $\HTC$ and $J_z$, as described in \cref{thm:general_k_ancilla}.
Then, the unitary $X^{\otimes n}$ which, up to a global phase, corresponds to a $\pi$ rotation around $x$, is sufficient to overcome the constraints in \cref{eq:const0_main}, so that up to a global phase, \textit{all} PI, U(1)-invariant unitaries can be realized using some $V\in\V_z$ and $X^{\otimes n}$. 
In particular, for any choice of phases $\ga_j\in[0,2\pi)$, consider the unitary
\begin{align}\label{eq:X_unitary}
    X^{\otimes n}\Bigg(\mathbb{I}^{\otimes n} +\sum_{j=j_{\text{min}}}^{n/2}  \big(e^{i\ga_j}-1\big) P_{j,j}\Bigg) X^{\otimes n}\,.
\end{align}
According to \cref{thm:general_k_ancilla}, the unitary in parentheses is realizable using Hamiltonians $\HTC$ and $J_z$, provided that $\ga_0=0$ in the case of even $n$.
Since $X^{\otimes n} P_{j,m} X^{\otimes n}=P_{j,-m}$, the overall unitary in \cref{eq:X_unitary} can be written as a unitary in the form of \cref{eq:charU}, with $\phi_{j,-j}=\ga_j$ and $\phi_{j,m}=0$ for $m\neq -j$. 
Combining such unitaries with those allowed by \cref{thm:general_k_ancilla}, we obtain \textit{all} unitaries of the form in \cref{eq:charU}, i.e. \textit{all} PI, U(1)-invariant unitaries that satisfy the constraint $\phi_{0,0} = 0$ in the case of even $n$. 
In this way, we obtain \textit{all} PI, U(1)-invariant unitaries, up to a global phase.
Finally, \cref{lem:PI_with_Jx} immediately implies that PI, U(1)-invariant unitaries together with rotations around $x$ generate \textit{all} PI unitaries.
\end{proof}

\newpage
\appsec{Which diagonal PI unitaries are realizable using TC and global $z$ field?}{app:comp_basis_diag}
Here, we prove the following corollary, stated in \cref{sec:diag_unis}:\\

\noindent\textbf{Corollary \ref{cor:diag}.} A PI unitary $U$ that is diagonal in the computational basis is realizable using Hamiltonians $\HTC$ and $J_z$, up to a global phase $\alpha$ -- where the oscillator is initialized in and returned to its vacuum state $\vac$ -- if and only if, there exists $\alpha\in[-\pi,\pi)$ and $\beta\in[-2\pi,2\pi)$ such that,
\begin{align}
    \phi_m &=
    \begin{cases}
        \alpha+m \beta  \,\,(\mathrm{mod}\,2\pi) &\text{for}\,\,m\leq0\,, \\[10pt]
        \text{arbitrary} &\text{for}\,\,m>0\,.
    \end{cases}
 \label{eq:phi_m_app}
\end{align}
\begin{proof}
Recall from \cref{sec:diag_unis} that any PI unitary $U$ that is diagonal in the computational basis has the form
\begin{align}
    U \eq \sum_{m=-n/2}^{n/2} e^{i\phi_m} \sum_{j=|m|}^{n/2}{P_{j,m}} \quad:\quad\phi_m\in[0,2\pi)\,,
 \label{eq:djd}
\end{align}
where $P_{j,m}$ is the projector to the common eigen-subspace of $n$-qubit operators $J^2$ and $J_z$ with corresponding eigenvalues $j(j+1)$ and $m$ (see \cref{eq:Pjm}).

Now, we determine the condition on $\phi_m: m=-n/2,\,\dotsc,\,n/2$ such that the resulting unitary satisfies condition (2) in \cref{thm:general_k_ancilla}, and vice versa.
First we show that if $\phi_m$ satisfy \cref{eq:phi_m_app} for each $m=-n/2,\,\dotsc,\,n/2$, then $U$ satisfies the constraints in \cref{thm:general_k_ancilla}.
Recall that general PI, U(1)-invariant $n$-qubit unitaries are in the form
\begin{align}
    U \eq \sum_{j=j_{\min}}^{n/2}\sum_{m=-j}^je^{i\phi_{j,m}} P_{j,m}\quad:\quad\phi_{j,m}\in[0,2\pi)\,.
\end{align}
Then, comparing this with \cref{eq:djd}, we conclude that for all $m$ and $j\ge |m|$:
\begin{align}
    \phi_{j,m}=\phi_m\,.
\end{align}
Furthermore, the assumption that for $m\leq0$, $\phi_m=\alpha+m \beta\,\,(\mathrm{mod}\,2\pi)$ implies that for all $j=\jmin,\,\dotsc,\, n/2$,
\begin{align}
    \phi_{j,-j} \eq \phi_{-j} \eq \alpha-j \beta\,\,(\text{mod}\,2\pi)\,.
\end{align}
Thus, condition (2) of \cref{thm:general_k_ancilla} is satisfied, and hence the theorem implies that $U$ is realizable using $\HTC$ and $J_z$, up to a global phase.

Conversely, suppose that $U$ is an arbitrary PI diagonal unitary realizable using the Hamiltonians $\HTC$ and $J_z$, up to a global phase. 
By \cref{thm:general_k_ancilla}, the latter assumption implies that  $\phi_{j,-j}=\alpha+j \beta'\,\,(\text{mod}\,2\pi)$ for some $\al\in[-\pi,\pi)$ and $\beta'\in[-2\pi,2\pi)$.
Furthermore, by assumption, $U$ is diagonal and PI, and therefore admits a decomposition of the form \cref{eq:djd}.
For each $-n/2\leq m\leq 0$, the projector
\[ P_m=\sum_{j=-m}^{n/2}P_{j,m} \]
contains $P_{-m,m}$. It follows that if $m$ is in the interval $-n/2\leq m\leq 0$, then
\begin{align}
    \phi_m \eq \phi_{-m,m} \eq \alpha-m\beta'\,\,(\mathrm{mod}\,2\pi),
\end{align}
which agrees with \cref{eq:phi_m_app} after setting $\beta=-\beta'$.

For $m>0$, however, the projector
\[ P_m=\sum_{j=m}^{n/2}P_{j,m} \]
contains no projector of the form $P_{j,-j}$.
Thus, \cref{thm:general_k_ancilla} imposes no constraint on the corresponding phase $e^{i\phi_m}$, so this phase is unrestricted.
\end{proof}

\newpage
\appsec{Two-Qubit Circuit Constructions}{app:two_qubit_circuits}
\appsubsec{Basics}{app:2q_circuits_basics}
In this section, we study the dynamics of a pair of qubits, coupled to a bosonic mode initialized in the vacuum state $\vac$, under Hamiltonians $\HTC$ and $J_z$.
That is, we consider general initial states $\ket{\psi}\otimes \vac$, where $\ket{\psi}\in\mathbb{C}^2\otimes \mathbb{C}^2$ is an arbitrary state of $n=2$ qubits.  

For $n=2$ qubits, the total angular momentum $j$ takes values $0$ and $1$, each appearing with multiplicity one.
The associated multiplicity spaces correspond to the symmetric and anti-symmetric representations of $\mathbb{S}_2$, which are both 1D.
In particular,
\begin{align}
    \mathbb{C}^2\otimes \mathbb{C}^2 \eq \mathbb{C} \ket{\Psi^-} \oplus \mathcal{H}_{\text{sym}} \eq \mathbb{C} \ket{\Psi^-} \oplus \text{Span}_{\mathbb{C}}\big\{\ket{\Psi^+}\,,
    \ket{00}\,,\ket{11} \big\}\,,
\end{align}
where
\begin{align}
 \begin{split}
    \ket{j=0,m=0}& \eq \frac{\ket{01}-\ket{10}}{\sqrt{2}} \eq \ket{\Psi^-}\,,\\[4pt]
    \ket{j=1,m=0}& \eq \frac{\ket{01}+\ket{10}}{\sqrt{2}} \eq \ket{\Psi^+}\,,\\[4pt]
    \ket{j=1,m=1}& \,=:\, \ket{00}\,,\\[10pt]
    \ket{j=1,m=-1}& \,=:\, \ket{11}\,,
 \end{split}
\end{align}
the right-hand side written in the computational basis, defined by $\sigma_z\ket{0}=\ket{0}$ and $\sigma_z\ket{1}=-\ket{1}$. 

State $\ket{\Psi^-}\otimes\vac$ is an eigenvector with eigenvalue 0 of both $\HTC$ and $J_z$, so it evolves trivially under these Hamiltonians.
Furthermore, since operator $Q=a^\dag a+J_z+n/2$ is conserved by both Hamiltonians, and because for any initial state $\ket{\psi}\otimes \vac$, the time-evolved state is restricted to the eigen-subspaces with eigenvalues $q\leq 2$, it follows that any state initially restricted to $\H_{\text{sym}}\otimes\vac$ remains restricted to the subspace
\begin{align}
 \begin{split}
    \bigoplus_{q=0}^2 \H_{q,\,j=1} \, \,\,&=\,\,  \mathbb{C}\ket{11}\otimes \vac\\ &\qquad \oplus 
    \text{Span}_{\mathbb{C}}\Big\{\ket{\Psi^+} \otimes \vac\, , \ket{11}\otimes \ket{1}_{\text{osc}}\Big\}\\[6pt]
    &\qquad \oplus \text{Span}_{\mathbb{C}} \,\Big\{\ket{00}\otimes \vac\,,\,\ket{\Psi^+}\otimes \ket{1}_{\text{osc}}\,,\,\ket{11}\otimes \ket{2}_{\text{osc}}\Big\} \\[10pt] 
    &\cong\, \C\oplus\C^2\oplus\C^3\,.
 \end{split}
 \label{eq:2q_subspace}
\end{align}
Additionally, since $\HTC$ and $J_z$ respect the permutational and U(1) symmetries, they are block-diagonal with respect to this decomposition.

\begin{figure*}
    \centering
    \includegraphics[width=0.7\textwidth]{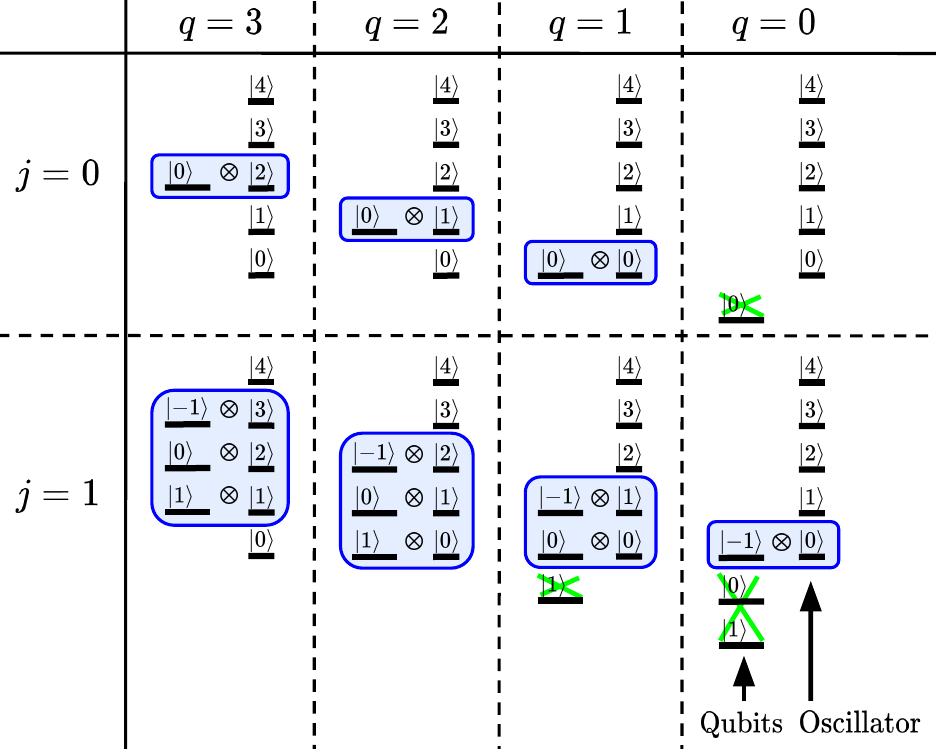}
    \caption{(from the companion paper \cite{symmetry_paper}) \textbf{2-qubit example:} $\H\qj$ is the span of states contained in the corresponding blue box. Note how the dimension of $\H\qj$ saturates at $2j+1$, for $q\geq j+\frac{n}{2}$.}
    \label{fig:2q_sectors}
\end{figure*}

Therefore, if the bosonic mode is initialized in its vacuum state $\vac$, it is sufficient to consider the action of $\V_z$ on the subspaces in \cref{eq:2q_subspace}.

It is natural to work with the matrices of $\HTC$, $\Zhat$, and $\Nhat$ in the \textit{ordered} basis described by \cref{eq:2q_subspace}, that is
\begin{align}
    \Big(\ket{11}\otimes\vac\,,\,\,\ket{\Psi^+}\otimes\vac\,,\,\,\ket{11}\otimes\ket{1}_{\text{osc}}\,,\,\,\ket{00}\otimes\vac\,,\,\,\ket{\Psi^+}\otimes\ket{1}_{\text{osc}}\,,\,\,\ket{11}\otimes\ket{2}_{\text{osc}}\Big)\,.
\end{align}
With respect to this basis, the matrices of $\HTC$, $J_z$, $a^{\dag}a$, and $Q$ are
\begin{align}
    \left[\HTC\right] &\eq 
    \arraycolsep=4pt\def\arraystretch{1.4}\left(
    \begin{array}{c|cc|ccc}
        0 &&&&&\\ \hline
        & 0 & \sqrt{2} &&& \\
        & \sqrt{2} & 0 &&& \\ \hline
        &&& 0 & \sqrt{2} & 0 \\
        &&& \sqrt{2} & 0 & 2 \\
        &&& 0 & 2 & 0
    \end{array}
    \right) \,\,,\qquad
    \left[\Zhat\right] \eq 
    \arraycolsep=2pt\def\arraystretch{1.1}\left(
    \begin{array}{c|cc|ccc}
        -1 &&&&&\\ \hline
        & 0 &&&& \\
        && -1 &&& \\ \hline
        &&& 1 && \\
        &&&& 0 &\\
        &&&&& -1
    \end{array}
    \right)\,, \\[8pt]
    \left[a^{\dagger}a\right] &\eq 
    \arraycolsep=4pt\def\arraystretch{1.1}\left(
    \begin{array}{c|cc|ccc}
        0 &&&&&\\ \hline
        & 0 &&&& \\
        && 1 &&& \\ \hline
        &&& 0 && \\
        &&&& 1 &\\
        &&&&& 2
    \end{array}
    \right) \,\,,\qquad
    \left[Q\right]\eq\left[a^{\dagger}a+\Zhat+\1\right] \eq 
    \arraycolsep=4pt\def\arraystretch{1.1}\left(
    \begin{array}{c|cc|ccc}
        0 &&&&&\\ \hline
        & 1 &&&& \\
        && 1 &&& \\ \hline
        &&& 2 && \\
        &&&& 2 &\\
        &&&&& 2
    \end{array}
    \right)\,.
 \label{eq:HJC_2qubit}
\end{align}
The block structure is indicated explicitly, with $q=0$ in the top left and $q=2$ in the bottom right.

The first statement of \Cref{thm:23_qubits} says that for any $V_1\in\SU(2)$ and $V_2\in\SU(3)$, the group of unitaries $\V_z$ realized by Hamiltonians $\HTC$ and $\Zhat$ contains
\begin{align*}
    V \eq 1\oplus V_1 \oplus V_2\,,
\end{align*}
with respect to the decomposition in \cref{eq:2q_subspace}.

It is also useful to write the matrices of the unitary operators $e^{i\HTC t}$ and $e^{i\Zhat t}$:
\begin{align}
    e^{i\HTC t} &\eq 
    \arraycolsep=4pt\def\arraystretch{1.4}
    \left(
    \begin{array}{c|cc|ccc}
        1 &&&&&\\ \hline
        & \cos(\sqrt{2}t) & i\sin(\sqrt{2}t) &&& \\
        & i\sin(\sqrt{2}t) & \cos(\sqrt{2}t) &&& \\ \hline
        &&& \frac{1}{3}\left[\cos(\sqrt{6}t)+2\right]&\frac{i}{\sqrt{3}}\sin(\sqrt{6}t)&\frac{\sqrt{2}}{3}\left[\cos(\sqrt{6}t)-1\right]\\ &&& \frac{i}{\sqrt{3}}\sin(\sqrt{6}t)&\cos(\sqrt{6}t)&i\sqrt{\frac{2}{3}}\sin(\sqrt{6}t)\\ &&& \frac{\sqrt{2}}{3}\left[\cos(\sqrt{6}t)-1\right]&i\sqrt{\frac{2}{3}}\sin(\sqrt{6}t)&\frac{1}{3}\left[2\cos(\sqrt{6}t)+1\right]
    \end{array}
    \right)
    \\[12pt]
    e^{i\Zhat t} &\eq
    \arraycolsep=4pt\def\arraystretch{1.4}
    \left(
    \begin{array}{c|cc|ccc}
        e^{-it} &&&&&\\ \hline
        & 1 & 0 &&& \\
        & 0 & e^{-it} &&& \\ \hline
        &&& e^{it} && \\ &&& & 1 & \\ &&&  && e^{-it}
    \end{array}
    \right)
    \,.
 \label{eq:HJC_gate}
\end{align}

\appsubsec{$A$-gate}{app:A_gate}
The goal of this section is to construct unitaries of the form
\begin{align}
    A\big(U\big) := 1\oplus U \oplus \1_2\,,\quad\text{where}\,\,U\in\SU(\H_{1,1})\simeq\SU(2)\,\,\text{is arbitrary.}
 \label{eq:Agate_definition}
\end{align}
This gate is used in all of our two-qubit circuit constructions from \cref{sec:useful_2q_gate}.

First, consider the fixed gate
\begin{align}
    e^{-i\HTC2\pi/\sqrt{6}} \eq 1\,\oplus\, R_x(4\pi/\sqrt{3})\,\oplus\, \1_2 \qeq \arraycolsep=6pt\def\arraystretch{1.8}
    \left(
    \begin{array}{c|c|c}
        1 && \\ \hline
        & R_x(4\pi/\sqrt{3}) & \\ \hline 
        && \1
    \end{array}
    \right)\,,
\end{align}
where $R_x(4\pi/\sqrt{3}):=e^{-i\sigma_x2\pi/\sqrt{3}}$ denotes a rotation by angle $4\pi/\sqrt{3}$ about the $x$-axis on the Bloch sphere.
Conjugating this gate by a well-chosen block-diagonal unitary $V$ which acts as $V_1$ in the $q=1$ sector, yields in the $q=1$ sector a rotation of fixed angle $4\pi/\sqrt{3}$ about an \textit{arbitrary} axis $\hat{n}$, and in $q=0$ and $q=2$ sectors the identity:
\begin{align}
    V\,e^{-i\HTC2\pi/\sqrt{6}}\,V^{\dag} \qeq \arraycolsep=6pt\def\arraystretch{1.8}\left(
    \begin{array}{c|c|c}
        1 && \\ \hline
        & V_1R_{x}(4\pi/\sqrt{3})V_1^{\dag} & \\ \hline 
        && \1
    \end{array}
    \right)
    \eq \arraycolsep=6pt\def\arraystretch{1.8}
    \left(
    \begin{array}{c|c|c}
        1 && \\ \hline
        & R_{\hat{n}}(4\pi/\sqrt{3}) & \\ \hline 
        && \1
    \end{array}
    \right)\,.
 \label{eq:Rn_gate0}
\end{align}
Such a $V_1$ can be realized using an Euler decomposition, by noting that in the $q=1$ sector, $\HTC$ looks like $\sigma_x$, and $\Zhat$ looks like $\sigma_z$.
In particular,
\begin{align}
    \pi_{1,1}(\HTC) \eq \sqrt{2}\sigma_x\,\,,\quad \pi_{1,1}(\Zhat) \eq \frac{1}{2}\Big(\sigma_z-\mathds{1}\Big)\,,
 \label{eq:charge1_pauli}
\end{align}
or
\begin{align}
    \sigma_x \eq \frac{\pi_{1,1}(\HTC)}{\sqrt{2}}\,\,,\quad \sigma_z\eq \pi_{1,1}\Big(\mathds{1}+2\Zhat\Big)\,.
\end{align}
In fact, the decomposition $V_1=e^{i\beta\sigma_x}e^{i\ga\sigma_z}$, where $\beta,\gamma\in[-\pi,\pi]$, is sufficient for taking the $x$-axis to any arbitrary axis, so consider
\begin{align}
    V \eq e^{i\beta\HTC/\sqrt{2}}e^{i2\ga\Zhat} &\eq \arraycolsep=6pt\def\arraystretch{1.8}
    \left(
    \begin{array}{c|c|c}
        e^{-i2\gamma} && \\ \hline
        & e^{-i\ga}V_1 & \\ \hline 
        && \widetilde{V}_2
    \end{array}
    \right)\,,
 \label{eq:Agate_Euler}
\end{align}
where $\widetilde{V}_2\in\SU(3)$ is some unitary that depends on $V_1$.
Thus, for any axis $\hat{n}$, and appropriately chosen Euler angles $\beta:=\sqrt{2}\theta_2$ and $\gamma:=\theta_1/2$,
\begin{align}
    \arraycolsep=6pt\def\arraystretch{1.8}
    \left(
    \begin{array}{c|c|c}
        1 && \\ \hline
        & R_{\hat{n}}(4\pi/\sqrt{3}) & \\ \hline 
        && \1
    \end{array}
    \right) &\qeq e^{i\theta_2\HTC}e^{i\theta_1\Zhat}\,\,e^{-i\HTC2\pi/\sqrt{6}} \,\,e^{-i\theta_1\Zhat}e^{-i\theta_2\HTC}\,.
 \label{eq:Rn_gate}
\end{align}

Also, for any integer $k$, rotations by angle $4\pi k/\sqrt{3}$ can be implemented using $k$ instances of $e^{-i\HTC2\pi/\sqrt{6}}$ and the same Euler angles, that is
\begin{align}
    \arraycolsep=6pt\def\arraystretch{1.8}
    \left(
    \begin{array}{c|c|c}
        1 && \\ \hline
        & R_{\hat{n}}(4\pi k/\sqrt{3}) & \\ \hline 
        && \1
    \end{array}
    \right) &\qeq e^{i\theta_2\HTC}e^{i\theta_1\Zhat}\,\,e^{-i\HTC2\pi k/\sqrt{6}} \,\,e^{-i\theta_1\Zhat}e^{-i\theta_2\HTC}\,.
 \label{eq:Rn_gate^k}
\end{align}
Next, we want to use the $2\times2$ unitaries $\exp\big(i(\hat{n}\cdot\vec{\sigma})2k\pi/\sqrt{3}\big)$ -- i.e. rotations of fixed angle but arbitrary axis -- to realize unitaries of arbitrary angle and arbitrary axis.
The following lemmas describe how to do this, with as few steps as possible.

\begin{lemma} \label{lem:rotation_mult_lemma}
Consider the set of all SU(2) elements in the form
\begin{align}
    \exp(i \gamma_1\hat{m}\cdot\hat{\sigma}) \exp(i\gamma_{2}\hat{n}\cdot\hat{\sigma})\,,
\end{align}
where $\gamma_1,\gamma_2\in[-\pi,\pi)$ are two fixed angles, whereas $\hat{n},\hat{m}\in\mathbb{R}^3$ are arbitrary unit vectors.
For any given values of $\gamma_1$ and $\gamma_2$, the above set is equal to the set of all SU(2) elements $\exp(i\alpha\hat{r}\cdot\vec{\sigma})$, with arbitrary unit vector $\hat{r}\in\mathbb{R}^3$ and angle $\alpha$ that satisfies the condition
\begin{align}
    \cos(\ga_1) \cos(\ga_2)- |\sin(\ga_1)| \times |\sin(\ga_2)| \,\,\leq\,\, \cos(\alpha) \,\,\leq\,\, \cos(\ga_1) \cos(\ga_2) + |\sin(\ga_1)| \times |\sin(\ga_2)|\,,
 \label{eq:cosA_bound}
\end{align}
or equivalently, $|\cos(\al) -\cos(\ga_1)\cos( \ga_2)| \le |\sin(\ga_1)| \times |\sin(\ga_2)|$.
\end{lemma}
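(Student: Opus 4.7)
The plan is to use the standard SU(2) rotation identity $\exp(i\gamma\,\hat{n}\cdot\vec{\sigma}) = \cos(\gamma)\,\mathbb{I} + i\sin(\gamma)\,\hat{n}\cdot\vec{\sigma}$ and directly compute the product. Multiplying out and using $(\hat{m}\cdot\vec{\sigma})(\hat{n}\cdot\vec{\sigma}) = (\hat{m}\cdot\hat{n})\,\mathbb{I} + i(\hat{m}\times\hat{n})\cdot\vec{\sigma}$, I would obtain
\[
\exp(i\gamma_1\,\hat{m}\cdot\vec{\sigma})\exp(i\gamma_2\,\hat{n}\cdot\vec{\sigma}) = c\,\mathbb{I} + i\,\vec{v}\cdot\vec{\sigma},
\]
where $c = \cos\gamma_1\cos\gamma_2 - \sin\gamma_1\sin\gamma_2\,(\hat{m}\cdot\hat{n})$ and $\vec{v}$ is a specific vector in the $\mathrm{Span}_{\mathbb{R}}\{\hat{m},\hat{n},\hat{m}\times\hat{n}\}$. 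Comparing with $\exp(i\alpha\,\hat{r}\cdot\vec{\sigma}) = \cos(\alpha)\,\mathbb{I} + i\sin(\alpha)\,\hat{r}\cdot\vec{\sigma}$, the identity component gives $\cos\alpha = c$.

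For necessity, since $\hat{m}\cdot\hat{n}\in[-1,1]$, this immediately yields the inequality in \cref{eq:cosA_bound}. The same fact can alternatively be read off from the trace: $\mathrm{Tr}[\exp(i\gamma_1\,\hat{m}\cdot\vec{\sigma})\exp(i\gamma_2\,\hat{n}\cdot\vec{\sigma})] = 2c$.

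For sufficiency, given any angle $\alpha$ satisfying \cref{eq:cosA_bound}, I would first solve $\cos\gamma_1\cos\gamma_2 - \sin\gamma_1\sin\gamma_2\,(\hat{m}\cdot\hat{n}) = \cos\alpha$ for $\hat{m}\cdot\hat{n}\in[-1,1]$ (if $\sin\gamma_1$ or $\sin\gamma_2$ is zero, the product is a single rotation and the claim is trivial or vacuous). Having fixed the relative angle between $\hat{m}$ and $\hat{n}$, this determines a specific resulting axis $\hat{r}_0$. To reach an arbitrary axis $\hat{r}$, I would pick $W\in\mathrm{SU}(2)$ implementing the rotation $\hat{r}_0\mapsto\hat{r}$ on the Bloch sphere, and conjugate:
\[
W\bigl[\exp(i\gamma_1\,\hat{m}\cdot\vec{\sigma})\exp(i\gamma_2\,\hat{n}\cdot\vec{\sigma})\bigr]W^{\dagger} = \exp(i\gamma_1\,\hat{m}'\cdot\vec{\sigma})\exp(i\gamma_2\,\hat{n}'\cdot\vec{\sigma}),
\]
where $\hat{m}',\hat{n}'$ are the rotated axes (with the same relative angle), and the resulting rotation axis is $\hat{r}$.

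I do not expect any serious obstacle: the lemma is a purely kinematic statement about SU(2), and the computation is three lines. The only mildly delicate point is handling the degenerate cases ($\sin\gamma_1=0$ or $\sin\gamma_2=0$, or $\hat{m}$ parallel/antiparallel to $\hat{n}$) so that the axis $\hat{r}_0$ is well-defined and the conjugation argument applies; these special cases collapse the interval in \cref{eq:cosA_bound} to a single value of $\cos\alpha$, which is precisely what the lemma predicts.
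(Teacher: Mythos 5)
Your proposal is correct and follows essentially the same route as the paper: expand both factors via $\exp(i\gamma\,\hat{n}\cdot\vec{\sigma})=\cos\gamma\,\mathbb{I}+i\sin\gamma\,\hat{n}\cdot\vec{\sigma}$, read off $\cos\alpha=\cos\gamma_1\cos\gamma_2-\sin\gamma_1\sin\gamma_2\,(\hat{m}\cdot\hat{n})$, and let $\hat{m}\cdot\hat{n}\in[-1,1]$ give the stated interval. Your explicit conjugation step for reaching an arbitrary axis $\hat{r}$ is just a cleaner formalization of the paper's remark that the axis is determined by the orientation of $\hat{m},\hat{n}$ at fixed $\hat{m}\cdot\hat{n}$, so there is no substantive difference.
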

\begin{proof}
We use the fact that $\exp(i\gamma\hat{m}\cdot\hat{\sigma})=\cos(\ga) \mathbb{I}+i\sin(\ga)\hat{m}\cdot\hat{\sigma}$.
Furthermore, the anti-commutation relation $\sigma_{i}\sigma_j+\sigma_{j}\sigma_i=2\delta_{i,j}\mathbb{I}$ implies that
\begin{align}
    (\hat{m}\cdot\hat{\sigma})(\hat{n}\cdot\hat{\sigma}) \eq (\hat{m}\cdot\hat{n})\1 + (\hat{m}\times\hat{n})\cdot\vec{\sigma}\,.
\end{align}
Therefore,
\begin{align}
    \exp(i \gamma_1\hat{m}\cdot\hat{\sigma}) \exp(i\gamma_{2}\hat{n}\cdot\hat{\sigma}) &\eq \Big(c_1c_2 - s_1s_2(\hat{m}\cdot\hat{n})\Big)\1 \p i\Big(s_1c_2\hat{m}+c_1s_2\hat{n}-s_1s_2(\hat{m}\times\hat{n})\Big)\cdot\vec{\sigma}\,,
\end{align}
where $c_j:=\cos(\gamma_j)$ and $s_j:=\sin(\gamma_j)$.
Equivalently, this product can be written as a single element $\exp(i\al\hat{r}\cdot\vec{\sigma})$, where
\begin{align}
    \cos(\al) &\eq c_1c_2-s_1s_2(\hat{m}\cdot\hat{n}) \\[4pt]
    \hat{r}\sin(\al) &\eq s_1c_2\hat{m}+c_1s_2\hat{n}-s_1s_2(\hat{m}\times\hat{n})\,.
\end{align}
Therefore, varying the angle between axes $\hat{m}$ and $\hat{n}$ --- or equivalently, varying $\hat{m}\cdot\hat{n}\in[-1,1]$ --- yields \cref{eq:cosA_bound}, that is
\begin{align}
    c_1c_2-|s_1||s_2| \,\,\leq\,\, \cos(\al) \,\,\leq\,\, c_1c_2+|s_1||s_2|\,.
\end{align}
Then, even with $\hat{m}\cdot\hat{n}$ fixed, $\hat{r}$ is determined entirely by the orientation of $\hat{m}$ and $\hat{n}$ (to be precise, the direction of $\hat{m}\times\hat{n}$) -- so arbitrary $\hat{r}$ can be obtained from an appropriate choice of orientations.
\end{proof}
This lemma means that unless $|\gamma_1|=|\gamma_2|=\pi/2$, it is impossible to realize all elements of SU(2) as $\exp(i\gamma_1\hat{m}\cdot\hat{\sigma}) \exp(i\gamma_{2}\hat{n}\cdot\hat{\sigma})$.
On the other hand, the following lemmas show how general elements of SU(2) can be written as products of two or more rotations by a single fixed angle $\gamma$, about different axes.
Therefore, using the construction in \cref{eq:Rn_gate^k}, the following lemmas tell us how to implement an $A$-gate.
\begin{lemma}\label{lem:2step} Fix $\gamma\in[-\pi,\pi)$. 
For any unit vector $\hat{n}\in\R^3$, unitary $U=\exp(i \alpha \hat{n}\cdot \vec{\sigma})$ satisfying $\cos(\alpha) \ge \cos(2\gamma)$ has a decomposition as 
\begin{align}
    U= \exp(i \gamma \hat{n}_2\cdot \vec{\sigma})\exp(i\gamma \hat{n}_1\cdot \vec{\sigma})\,.
 \label{eq:2_step_unitary}
\end{align}
for some unit vectors $\hat{n}_1,\hat{n}_2\in\mathbb{R}^3$.
\end{lemma}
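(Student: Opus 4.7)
The plan is to obtain Lemma \ref{lem:2step} as a direct specialization of the immediately preceding Lemma \ref{lem:rotation_mult_lemma}, taking $\gamma_1 = \gamma_2 = \gamma$. That lemma characterizes the set of SU(2) elements expressible as a product $\exp(i\gamma_1 \hat{m}\cdot\vec{\sigma})\exp(i\gamma_2 \hat{n}\cdot\vec{\sigma})$, with $\hat{m}, \hat{n}$ free, as precisely those rotations $\exp(i\alpha\hat{r}\cdot\vec{\sigma})$ with $\hat{r}$ arbitrary and angle $\alpha$ constrained by $|\cos\alpha - \cos\gamma_1\cos\gamma_2| \le |\sin\gamma_1||\sin\gamma_2|$. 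Substituting $\gamma_1 = \gamma_2 = \gamma$ reduces this to $|\cos\alpha - \cos^2\gamma| \le \sin^2\gamma$, i.e.\ $\cos(2\gamma) \le \cos\alpha \le 1$. The upper bound is automatic for any rotation angle, so the surviving condition is exactly the hypothesis $\cos\alpha \ge \cos(2\gamma)$.

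Second, the axis-freedom clause of Lemma \ref{lem:rotation_mult_lemma} immediately supplies the arbitrariness of $\hat{n}$ in the present statement: once the angle constraint holds, one can realize a rotation of angle $\alpha$ about any unit vector $\hat{n}$ by choosing $\hat{n}_1$ and $\hat{n}_2$ appropriately, yielding the decomposition in \cref{eq:2_step_unitary}.

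As a brief sanity check that does not require re-running the proof of Lemma \ref{lem:rotation_mult_lemma}, one can invoke an $\mathrm{SO}(3)$-covariance argument on the explicit identity
\begin{equation*}
\hat{r}\sin\alpha \,=\, sc(\hat{n}_1 + \hat{n}_2) - s^2(\hat{n}_1 \times \hat{n}_2), \qquad s := \sin\gamma,\ c := \cos\gamma,
\end{equation*}
which is equivariant under the simultaneous rotation $\hat{n}_j \mapsto R\hat{n}_j$ for any $R \in \mathrm{SO}(3)$: this sends $\hat{r} \to R\hat{r}$ while leaving $\alpha$ invariant. So it suffices to produce one representative direction $\hat{r}_0$ for each admissible $\alpha$ -- e.g.\ by placing $\hat{n}_1, \hat{n}_2$ symmetrically in a plane containing $\hat{r}_0$ and tuning the opening angle between them to match $\cos\alpha$ -- and then conjugate to reach every other direction.

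There is no substantive obstacle: the lemma is essentially the $\gamma_1 = \gamma_2$ case of the previous one, repackaged in the form most convenient for the $A$-gate construction that follows. The only mildly delicate point is the boundary case $\cos\alpha = \cos(2\gamma)$, where $\hat{n}_1$ and $\hat{n}_2$ must coincide and the decomposition becomes rigid; this affects uniqueness rather than existence, and so does not impact the statement.
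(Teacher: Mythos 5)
Your proposal is correct and matches the paper's own proof: the paper also derives Lemma \ref{lem:2step} by specializing Lemma \ref{lem:rotation_mult_lemma} to $\gamma_1=\gamma_2=\gamma$, so that the constraint becomes $\cos(2\gamma)\le\cos(\alpha)\le 1$ with the upper bound automatic. Your added remarks on SO(3) covariance and the boundary case $\cos\alpha=\cos(2\gamma)$ are accurate but not needed beyond the paper's one-line argument.
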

\begin{proof}
Follows directly from \cref{lem:rotation_mult_lemma}.
In particular, $U=\exp(i\al\hat{n}\cdot\vec{\sigma})$ has a decomposition of the form in \cref{eq:2_step_unitary} if
\begin{align}
    \cos(2\gamma) \eq \cos^2(\gamma) - \sin^2(\gamma) \,\,\leq\,\, \cos(\al) \,\,\leq\,\, 1\,.
\end{align}
\end{proof}
Note that this lemma, in particular, implies that if $\cos(\alpha) \ge \cos(2k\gamma)$ for some $k\in\mathbb{N}$, then there are unit vectors $\hat{n}_1,\hat{n}_2\in\mathbb{R}^3$ such that
\begin{align}
    U= \exp(i (k\gamma) \hat{n}_2\cdot \vec{\sigma})\exp(i (k\gamma) \hat{n}_1\cdot \vec{\sigma})\,.
 \label{eq:2k_step_unitary}
\end{align}
The number of fixed rotations by angle $2\gamma$ in this decomposition is $2k$.
For $k\geq2$, there can be more efficient decompositions than \cref{eq:2k_step_unitary}.
For example,
\begin{lemma}\label{lem:3step}
Consider a fixed $\gamma\in[-\pi,\pi)$.
Any arbitrary unitary $U\in \SU(2)$ with decomposition $U=\exp(i \alpha \hat{n}\cdot \vec{\sigma})$ for unit vector $\hat{n}\in\R^3$ and $\alpha\in[-\pi,\pi)$ has a 3-step decomposition as
\begin{align}
    U\eq\exp(i \ga [\pm \hat{n}]\cdot \vec{\sigma})  \exp(i \ga \hat{n}_2\cdot \vec{\sigma})\exp(i \ga \hat{n}_1\cdot \vec{\sigma})\,,
 \label{eq:3_step_unitary}
\end{align}
for some unit vectors $\hat{n}_1,\hat{n}_2\in\mathbb{R}^3$, if 
$\cos(\al\mp \ga) \geq\cos(2\ga)$.
\end{lemma}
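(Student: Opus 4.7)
The plan is to reduce the three-step decomposition claim directly to the two-step decomposition of Lemma \ref{lem:2step}. The key observation is that the third factor in \eqref{eq:3_step_unitary}, namely $\exp(i\gamma[\pm\hat{n}]\cdot\vec{\sigma})$, shares its rotation axis with $U = \exp(i\alpha\hat{n}\cdot\vec{\sigma})$. Rotations about a common axis commute and their angles add, so multiplying \eqref{eq:3_step_unitary} on the left by $\exp(-i\gamma[\pm\hat{n}]\cdot\vec{\sigma})$ gives
\begin{equation*}
\exp\!\big(i(\alpha \mp \gamma)\,\hat{n}\cdot\vec{\sigma}\big) \;=\; \exp(i\gamma\,\hat{n}_2\cdot\vec{\sigma})\exp(i\gamma\,\hat{n}_1\cdot\vec{\sigma}).
\end{equation*}

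I would then simply apply Lemma \ref{lem:2step} to the left-hand side. That lemma asserts that a single-axis SU(2) element $\exp(i\alpha'\hat{n}\cdot\vec{\sigma})$ admits a two-step decomposition with fixed angle $\gamma$ if and only if $\cos(\alpha') \geq \cos(2\gamma)$. Setting $\alpha' := \alpha \mp \gamma$ yields exactly the stated condition $\cos(\alpha \mp \gamma) \geq \cos(2\gamma)$, with the sign choice matching the sign choice in the third factor. Conversely, given any $U$ satisfying this inequality for one of the two signs, run this reduction backwards: pick unit vectors $\hat{n}_1,\hat{n}_2$ from Lemma \ref{lem:2step} that decompose $\exp(i(\alpha \mp \gamma)\hat{n}\cdot\vec{\sigma})$, and prepend $\exp(i\gamma[\pm\hat{n}]\cdot\vec{\sigma})$ to obtain \eqref{eq:3_step_unitary}.

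There is really no hard step here, only two small points worth writing out carefully. First, one should note that Lemma \ref{lem:2step} is phrased in terms of a representative $\exp(i\alpha'\hat{n}\cdot\vec{\sigma})$, but the SU(2) element it describes is invariant under $(\alpha',\hat{n}) \mapsto (\alpha'+2\pi,\hat{n})$ and $(\alpha',\hat{n}) \mapsto (-\alpha',-\hat{n})$; so when $\alpha \mp \gamma$ falls outside a convenient interval one may harmlessly shift it by $2\pi$ or flip the axis without affecting the validity of the two-step decomposition. Second, the ``$\pm$'' in the statement is a genuine choice: for a given $\alpha$, at least one of the inequalities $\cos(\alpha - \gamma) \geq \cos(2\gamma)$ or $\cos(\alpha + \gamma) \geq \cos(2\gamma)$ can hold (and which one holds depends on $\alpha$ and $\gamma$), and the lemma should be read as: if \emph{either} sign choice satisfies the inequality, the corresponding three-step decomposition exists. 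This observation is what makes Lemma \ref{lem:3step} strictly more powerful than simply iterating Lemma \ref{lem:2step} to four steps, and it is what underlies the interaction-time estimates quoted for the $A$-gate construction in Table \ref{tab:gadget_circuits}.
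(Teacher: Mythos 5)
Your proposal is correct and is essentially the paper's own proof: the paper likewise factors out $\exp(\pm i\gamma\hat{n}\cdot\vec{\sigma})$ to reduce the claim to applying \cref{lem:2step} to $\exp(i(\alpha\mp\gamma)\hat{n}\cdot\vec{\sigma})$. The only cosmetic difference is that you spell out the commuting-axis cancellation and the sign bookkeeping explicitly (and note that only the ``if'' direction of \cref{lem:2step} is needed), which the paper leaves implicit.
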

\begin{proof}
This follows immediately by applying \cref{lem:2step} to unitary $U'_\pm=\exp(\pm i \ga\hat{n}\cdot \vec{\sigma}) U=\exp([\alpha\pm \gamma] i \ga\hat{n}\cdot \vec{\sigma})$.
\end{proof}

\appsubsec{Implementing an $A$-gate}{app:Agate_implementation}
The previous two lemmas imply how to efficiently implement $A$-gates using sequences of the form in \cref{eq:Rn_gate^k}.
Recall that for any unit vector $\hat{n}\in\R^3$, there are angles $\theta_1(\hat{n})\in\big[-2\pi,2\pi\big]$ and $\theta_2(\hat{n})\in\Big[-\frac{\pi}{\sqrt{2}},\frac{\pi}{\sqrt{2}}\Big]$ such that
\begin{align}
    \arraycolsep=6pt\def\arraystretch{1.8}
    \left(
    \begin{array}{c|c|c}
        1 && \\ \hline
        & \exp\big(-i(\hat{n}\cdot\vec{\sigma})2\pi k/\sqrt{3}\big) & \\ \hline 
        && \1
    \end{array}
    \right) &\qeq 
    e^{i\theta_2\HTC}e^{i\theta_1\Zhat}\,\,e^{-i\HTC2\pi k/\sqrt{6}} \,\,e^{-i\theta_1\Zhat}e^{-i\theta_2\HTC}\,.
 \label{eq:Rnk_gate}
\end{align}
\begin{proposition}
\label{prop:Agate_decomp}
Consider the fixed angle $\de:=2\pi/\sqrt{3}\approx1.15\pi$.
Then, for any unit vector $\hat{n}\in\R^3$, the unitary $U=\exp(i\al\hat{n}\cdot\vec{\sigma})\in\SU(2)$ has one of the following decompositions, based on angle $\al$:
\begin{align}
    U \eq \begin{cases}
            \exp\big(i \de\,\hat{n}_2\cdot \vec{\sigma}\big)\exp\big(i \de\,\hat{n}_1\cdot \vec{\sigma}\big) & \cos(\al)\geq\cos(2\de)\approx0.56 \\[6pt]
            \exp\big(i (2\de)\hat{n}_2\cdot \vec{\sigma}\big)\exp\big(i (2\de)\hat{n}_1\cdot \vec{\sigma}\big) & \cos(\al)\geq\cos(4\de)\approx-0.36 \\[6pt]
            \exp\big(i \de\,[\pm\hat{n}]\cdot \vec{\sigma}\big)  \exp\big(i \de\,\hat{n}_2\cdot \vec{\sigma}\big)\exp\big(i \de\,\hat{n}_1\cdot \vec{\sigma}\big) & \cos(\al)\leq\cos(3\de)\approx-0.11
          \end{cases}\,,
 \label{eq:three_cases}
\end{align}
for appropriately chosen unit vectors $\hat{n}_1,\hat{n}_2\in\R^3$ (different for each case), and at least one of $[\pm\hat{n}]$ in the third case.
The corresponding interaction times for implementing $A(U)$ are
\begin{align}
    0.82\approx\frac{2}{\sqrt{6}}\quad&\leq\quad\tau_A(\text{2-step}) \times\frac{\gTC}{2\pi}\quad\leq\quad \frac{2}{\sqrt{6}} + \frac{3}{2\sqrt{2}}\approx1.88 && \cos(\al)\geq\cos(2\de)\approx0.56 \label{eq:time_bound_2step}\\[6pt]
    1.63\approx\frac{4}{\sqrt{6}}\quad&\leq\quad\tau_A(\text{4-step}) \times\frac{\gTC}{2\pi}\quad\leq\quad \frac{4}{\sqrt{6}} + \frac{3}{2\sqrt{2}}\approx2.69 && \cos(\al)\geq\cos(4\de)\approx-0.36 \label{eq:time_bound_4step}\\[6pt]
    1.22\approx\frac{3}{\sqrt{6}}\quad&\leq\quad\tau_A(\text{3-step}) \times\frac{\gTC}{2\pi}\quad\leq\quad \frac{3}{\sqrt{6}} + \frac{4}{2\sqrt{2}}\approx2.64 && \cos(\al)\leq\cos(3\de)\approx-0.11
 \label{eq:time_bound_3step}
\end{align}
Therefore, the total interaction time required to implement an $A$-gate using this construction is bounded by
\begin{align}
    \tau_A\,\,\leq\,\,\left(\frac{4}{\sqrt{6}}+\frac{3}{2\sqrt{2}}\right)\times2\pi\gTC^{-1} \,\,\approx\,\,2.69\times2\pi\gTC^{-1}\,.
\end{align}
\end{proposition}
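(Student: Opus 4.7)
The plan is to decompose $A(U)$ into a short product of the fixed-angle rotation gadget already constructed in \cref{eq:Rnk_gate}, using \cref{lem:2step} and \cref{lem:3step} to cover all rotation angles $\alpha$, and then to bound the total interaction time by separately accounting for the ``core'' $\HTC$-blocks and the Euler-sandwich layers that choose the rotation axes.

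First, recall from \cref{eq:Rnk_gate} that the $A$-gate $1\oplus R_{\hat n}(4\pi k/\sqrt{3})\oplus \1_2$ is realized by a single core block $e^{-i\HTC\,2\pi k/\sqrt{6}}$, of TC interaction time $2\pi k/\sqrt{6}$, sandwiched between an Euler layer $V=e^{i\theta_2\HTC}e^{i\theta_1\Zhat}$ (with $|\theta_2|\le\pi/\sqrt{2}$) and its inverse. When several such rotations are composed, two adjacent Euler blocks $V_{i+1}^\dagger V_i$ combine into a single product of $e^{it\HTC}$ and $e^{is\Zhat}$ factors and can be re-parameterized as one Euler layer whose TC time is still bounded by $\pi/\sqrt{2}$. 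Hence, a product of $R$ weighted rotations with total weight $K=\sum_i k_i$ can be implemented with total interaction time at most $K\cdot 2\pi/\sqrt{6} + (R+1)\cdot\pi/\sqrt{2}$, matching exactly the three stated bounds when $(R,K)=(2,2),(2,4),(3,3)$ respectively.

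Next, I apply the two composition lemmas in each regime of $\alpha$. For $\cos\alpha\ge\cos(2\delta)$, \cref{lem:2step} with $\gamma=\delta$ writes $U=\exp(i\delta\hat n_2\!\cdot\!\vec\sigma)\exp(i\delta\hat n_1\!\cdot\!\vec\sigma)$, giving $(R,K)=(2,2)$ and the first bound. For $\cos\alpha\ge\cos(4\delta)$, the same lemma with $\gamma=2\delta$ gives a two-step product of weight-$2$ rotations and the second bound. For $\cos\alpha\le\cos(3\delta)$, \cref{lem:3step} with $\gamma=\delta$ provides, for a suitable choice of sign on $[\pm\hat n]$, a three-step product of weight-$1$ rotations, yielding the third bound. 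Completeness follows from the numerical inequality $\cos(4\delta)\approx-0.36<-0.11\approx\cos(3\delta)$: cases 2 and 3 already cover all $\cos\alpha\in[-1,1]$, with case 1 available as a cheaper implementation within its range.

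The main obstacle is to verify the precondition of \cref{lem:3step} in case 3, namely that $\cos\alpha\le\cos(3\delta)$ implies $\cos(\alpha\mp\delta)\ge\cos(2\delta)$ for at least one choice of sign. This reduces to the identity
\begin{equation}
\max_{\pm}\cos(\alpha\mp\delta)\;=\;\cos\alpha\cos\delta + |\sin\alpha\sin\delta|,
\end{equation}
combined with the periodic reduction of $\delta=2\pi/\sqrt{3}$ to the principal branch, from which one checks that equality at the boundary $\cos\alpha=\cos(3\delta)$ gives precisely $\cos(\alpha+\delta)=\cos(2\delta)$ (the angle $\alpha\approx 1.68$ maps to $\alpha+\delta\equiv\pm 2\delta\!\pmod{2\pi}$), and the inequality then holds on the full range $\cos\alpha\le\cos(3\delta)$ by monotonicity. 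The remaining verifications --- confirming that the Euler parameterization $V=e^{i\theta_2\HTC}e^{i\theta_1\Zhat}$ suffices to rotate the $\HTC$-axis to an arbitrary $\hat n$ in the charge-$1$ sector with the stated bound $|\theta_2|\le\pi/\sqrt{2}$, and that the merged middle Euler layer inherits the same bound --- follow from the explicit $\su(2)$ identification in \cref{eq:charge1_pauli} and a direct computation.
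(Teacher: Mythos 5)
Your proposal is correct and follows essentially the same route as the paper's proof: it invokes \cref{lem:2step} with $\gamma=\delta$ and $\gamma=2\delta$ for the first two cases and \cref{lem:3step} for the third, builds each rotation from the fixed-angle gadget of \cref{eq:Rnk_gate}, and obtains the time bounds by the same bookkeeping of core $\HTC$ blocks plus merged Euler layers (each contributing at most $\pi/\sqrt{2}$ of TC time, giving $K\cdot 2\pi/\sqrt{6}+(R+1)\pi/\sqrt{2}$, which reproduces the three stated bounds). The only cosmetic deviation is your verification of the \cref{lem:3step} precondition via $\max_\pm\cos(\alpha\mp\delta)=\cos\alpha\cos\delta+|\sin\alpha\sin\delta|$ rather than the paper's interval splitting; be aware that this function is not monotone over $\cos\alpha\le\cos(3\delta)$ — the correct statement is that its minimum on that range is attained at the boundary $\cos\alpha=\cos(3\delta)$ (the interior critical point is a maximum), which is what makes the inequality hold.
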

\begin{proof}
The first two decompositions in \cref{eq:three_cases} follow immediately from \cref{lem:2step}.
The third follows from \cref{lem:3step}. %
For any unit vector $\hat{n}\in\R^3$, consider $U=\exp(i\al\hat{n}\cdot\vec{\sigma})$, where $\cos(\al)\leq\cos(3\delta)\approx-0.11$.
Angle $\al$ is restricted to the range
\begin{align}
    0.54\pi\,\approx\,-3\de + 4\pi \,\,\leq\,\, \al \,\,\leq\,\, 3\de-2\pi\,\approx\,1.46\pi\,.
\end{align}
Split this domain in two:
\begin{align}
    [-3\de+ 4\pi,\,3\de-2\pi] &\eq [-3\de+ 4\pi,\,\de] \,\,\cup\,\, [-\de,\,3\de-2\pi] \\[2pt]
    &\,\approx\,[0.54\pi,\,1.15\pi] \,\,\cup\,\, [0.85\pi,\,1.46\pi]\,.
\end{align}
Then, for $\al\in[-3\de+ 4\pi,\,\de]$ (blue region in \cref{fig:3step_delta}),
\begin{align}
    \al+\de \in [-2\de+ 2\pi,\,2\de-2\pi] \,\approx\, [-0.31\pi,\,0.31\pi] \quad\implies\quad\cos(\al+\de) \,\geq\, \cos(2\de)\,.
\end{align}
And for $\al\in[-\de,\,3\de-2\pi]$ (red region in \cref{fig:3step_delta}),
\begin{align}
    \al-\de \in [-2\de+ 2\pi,\,2\de-2\pi] \,\approx\, [-0.31\pi,\,0.31\pi] \quad\implies\quad\cos(\al-\de) \,\geq\, \cos(2\de)\,.
\end{align}
Thus, for any $\al$ satisfying $\cos(\al)\leq\cos(3\de)$, either $\cos(\al+\de)\geq\cos(2\de)$ or $\cos(\al-\de)\geq\cos(2\de)$, so the condition of \cref{lem:3step} is satisfied.
This proof is illustrated geometrically in \cref{fig:3step_delta}.
\begin{figure}[htp]
    \centering
    \includegraphics[width=0.4\textwidth]{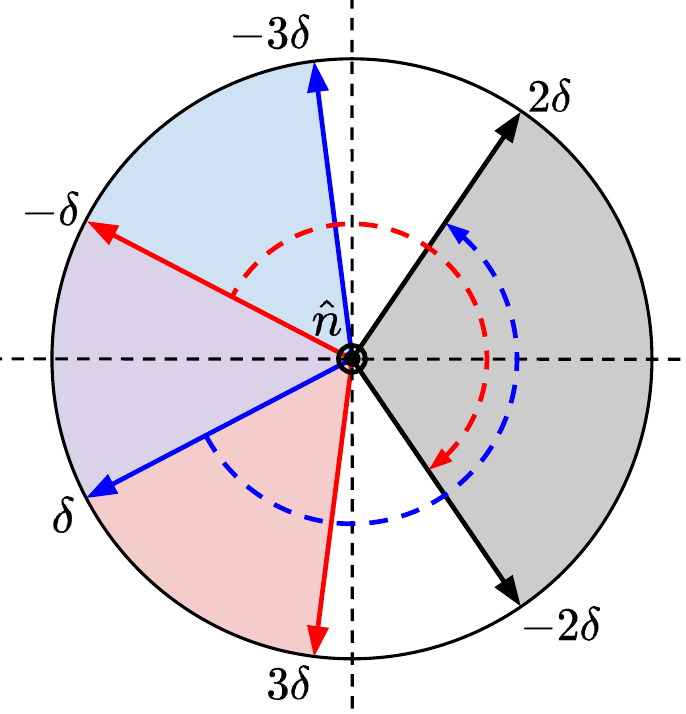}
    \caption{\textbf{Geometry of 3-step decomposition in \cref{eq:three_cases}:} (1) For $\al$ in the gray zone, i.e. $\cos(\al)\geq\cos(2\de)$, there is a $2$-step decomposition $\exp(i\al\hat{n}\cdot\vec{\sigma})=\exp(i\de\hat{n}_1\cdot\vec{\sigma})\exp(i\de\hat{n}_2\cdot\vec{\sigma})$, (2) For $\al$ in the blue zone $[-3\de+4\pi,\,\de]$, $\al+\de$ is in the gray zone, (3) For $\al$ in the red zone $[-\de,\,3\de-2\pi]$, $\al-\de$ is in the gray zone. Thus, for $\al$ in the blue \textit{or} red zones, i.e. $\cos(\al)\leq\cos(3\de)$, there is a 3-step decomposition $\exp(i\al\hat{n}\cdot\vec{\sigma})=\exp(i\de[\pm\hat{n}]\cdot\vec{\sigma})\exp(i\de\hat{n}_1\cdot\vec{\sigma})\exp(i\de\hat{n}_2\cdot\vec{\sigma})$.}
    \label{fig:3step_delta}
\end{figure}

The bounds on total interaction time in \cref{eq:time_bound_2step,eq:time_bound_4step,eq:time_bound_3step} are deduced by writing out explicitly the corresponding $A$-gates.
Using \cref{eq:Rnk_gate}, the 3-step circuit of \cref{eq:3_step_unitary} with $\gamma=\delta$ reads as:
\begin{align}
 \begin{split}
   A(U)
   &\eq \UTC\big(-\theta_2(\hat{n})\big)\,\UZ(-\theta_1(\hat{n}))\,\UTC\big(2\pi/\sqrt{6}\big) \,\UZ(\theta_1(\hat{n}))\,\UTC\big(\theta_2(\hat{n})-\theta_2(\hat{n}_2)\big)\\[4pt]&\qquad\times\UZ(-\theta_1(\hat{n}_2))\,\UTC\big(2\pi/\sqrt{6}\big)\,\UZ(\theta_1(\hat{n}_2))\,\UTC\big(\theta_2(\hat{n}_2)-\theta_2(\hat{n}_1)\big)\\[4pt]&\qquad\times\UZ(-\theta_1(\hat{n}_1))\,\UTC\big(2\pi/\sqrt{6}\big)\,\UZ(\theta_1(\hat{n}_1))\,\UTC\big(\theta_2(\hat{n}_1)\big)\,.
 \end{split}
\end{align}
Recall that $\UTC(r):=\exp(-ir\HTC/\gTC)$ and $\UZ(\theta):=\exp(-i\theta J_z)$.
Therefore, the total interaction time is
\begin{align}
    \tau_A(\text{3-step}) = \left[ 3\times\frac{2\pi}{\sqrt{6}} + |\theta_2(\hat{n})| + |\theta_2(\hat{n})-\theta_2(\hat{n}_2)| + |\theta_2(\hat{n}_2)-\theta_2(\hat{n}_1)| + |\theta_2(\hat{n}_1)|\right]\times\gTC^{-1}\,.
 \label{eq:Atime_3step}
\end{align}
For any axis $\hat{n}$, angle $\theta_2(\hat{n})\in\frac{1}{\sqrt{2}}[-\pi,\pi)$, which implies $|\theta_2(\hat{n})|$,  $|\theta_2(\hat{n})-\theta_2(\hat{n}_2)|$, $|\theta_2(\hat{n}_2)-\theta_2(\hat{n}_1)|$, and $|\theta_2(\hat{n}_1)|$ each lie within $\big[0,\pi/\sqrt{2}\big)$.
Therefore,
\begin{align}
    \frac{3}{\sqrt{6}}\times2\pi\gTC^{-1} \quad\leq\quad\tau_A(\text{3-step}) \quad\leq\quad \left[\frac{3}{\sqrt{6}} + \frac{4}{2\sqrt{2}}\right]\times2\pi\gTC^{-1}\,.
 \label{eq:3step_time_bounds}
\end{align}
On the other hand, the $2k$-step circuit of \cref{eq:2k_step_unitary} reads
\begin{align}
 \begin{split}
    A(U) &\eq \UTC\big(-\theta_2(\hat{n})\big)\,\UZ(-\theta_1(\hat{n})) \,\UTC\big(2k\pi/\sqrt{6}\big)\, \UZ(\theta_1(\hat{n}))\,\UTC\big(\theta_2(\hat{n})-\theta_2(\hat{m})\big) \\[4pt]
    &\qquad\times \UZ(-\theta_1(\hat{m}))\,\UTC\big(2k\pi/\sqrt{6}\big)\,\UZ(\theta_1(\hat{m})\UTC\big(\theta_2(\hat{m})\big)\,,
 \end{split}
\end{align}
for which the total interaction time is
\begin{align}
    \tau_A(2k\text{-step}) = \left[\frac{(2k)2\pi}{\sqrt{6}} + |\theta_2(\hat{n})| + |\theta_2(\hat{n})-\theta_2(\hat{m})| + |\theta_2(\hat{m})|\right]\times\gTC^{-1}\,,
 \label{eq:Atime_2step}
\end{align}
which means
\begin{align}
    \frac{2k}{\sqrt{6}}\times2\pi\gTC^{-1} \quad\leq\quad\tau_A(2k\text{-step})\quad\leq\quad \left[\frac{2k}{\sqrt{6}} + \frac{3}{2\sqrt{2}}\right]\times2\pi\gTC^{-1}\,.
 \label{eq:2step_time_bounds}
\end{align}
This yields \cref{eq:time_bound_2step} and \cref{eq:time_bound_4step} for $k=1$ and $k=2$, respectively.
\end{proof}
Note that practically, fewer steps generally means smaller interaction times, but depending on the Euler angles for a given decomposition, this is not always the case.

\appsubsec{Minimizing total interaction time of an $A$-gate}{app:Agate_optimization}
The interaction time of any particular $A$-gate (\cref{eq:Atime_3step,eq:Atime_2step}) depends on the Euler angles in \cref{eq:Rnk_gate}.
In fact, neither the 2-step decompositions of \cref{lem:2step} nor the Euler decompositions in \cref{eq:Rn_gate} are unique.
Here, we describe how to optimize over these degrees of freedom to find decompositions that minimize the total interaction time.

To implement an $A$-gate $A(U)$, where $U:=\exp(i\al\hat{\mu}\cdot\vec{\sigma})$, \cref{prop:Agate_decomp} states that we may use a 2-step, 3-step, or 4-step decomposition, depending on the value of $\al$.
Note that the ranges on $\al$ for realizing 3-step and 4-step decompositions overlap, as do the ranges on possible interaction times.
Specifically, for any $U$ with $-0.36\approx\cos(4\de)\leq\cos(\al)\leq\cos(3\de)\approx-0.11$, we should optimize over both 3-step and 4-step decompositions.
Also, note that the 3-step decomposition in \cref{eq:three_cases} is a 2-step decomposition followed by a fixed rotation about $\hat{n}$, so optimizing a 3-step decomposition boils down to optimizing a $2k$-step decomposition of the form
\begin{align}
    U := \exp(i\al\,\hat{\mu}\cdot\vec{\sigma}) \,\,=\,\, \exp(i(k\de)\,\hat{n}\cdot\vec{\sigma})\exp(i(k\de)\,\hat{m}\cdot\vec{\sigma}) \quad:\quad\cos(\al)\geq\cos(2k\de)\,,
 \label{eq:2step}
\end{align}
where $\hat{\mu}$ is any axis, $\delta=2\pi/\sqrt{3}$ is fixed, $k=1$ (2-step) or $2$ (4-step), and axes $\hat{n},\hat{m}$ satisfy the equations
\bes
\begin{align}\label{eq:cosA}
    \cos(\al) &\eq \cos^2(k\de)-\sin^2(k\de)(\hat{n}\cdot\hat{m})\\[4pt]\label{eq:sinA}
    \hat{\mu}\sin(\al) &\eq \sin(k\de)\cos(k\de)(\hat{n}+\hat{m}) + \sin^2(k\de)(\hat{n}\times\hat{m})\,.
\end{align}
\ees
Solutions for $\hat{n},\hat{m}$ are not unique, because rotations about axis $\hat{\mu}$ leave these equations invariant.
In particular, for any solution $\hat{n},\hat{m}$ of \cref{eq:cosA,eq:sinA}, the simultaneously rotated vectors
\begin{align}
    \hat{n}_{\text{rot}} &\eq \cos(\theta)\hat{n} + \sin(\theta)(\hat{n}\times\hat{\mu}) + (1-\cos(\theta))(\hat{n}\cdot\hat{\mu})\hat{\mu}\\[4pt]
    \hat{m}_{\text{rot}} &\eq \cos(\theta)\hat{m} + \sin(\theta)(\hat{m}\times\hat{\mu}) + (1-\cos(\theta))(\hat{m}\cdot\hat{\mu})\hat{\mu}
\end{align}
are also solutions, for any $\theta\in[0,2\pi)$.

The two sets of Euler angles $\{\theta_1(\hat{n}_{\text{rot}}),\theta_2(\hat{n}_{\text{rot}})\}$ and $\{\theta_1(\hat{m}_{\text{rot}}),\theta_2(\hat{m}_{\text{rot}})\}$ corresponding to the decomposition in see \cref{eq:Rnk_gate} for rotated components $\exp(i(k\de)\,\hat{n}_{\text{rot}}\cdot\sigma)$ and $\exp(i(k\de)\,\hat{m}_{\text{rot}}\cdot\sigma) $ depend on the overall rotation parameter $\theta$.
Therefore, to construct the optimal $A(U)$, we optimize the contribution from these Euler angles to the total interaction time,
\begin{align}
    |\theta_2(\hat{n}_{\text{rot}})| + |\theta_2(\hat{n}_{\text{rot}})-\theta_2(\hat{m}_{\text{rot}})| + |\theta_2(\hat{m}_{\text{rot}})|\,,
 \label{eq:Euler_angle_time}
\end{align}
as a function of $\theta\in[0,2\pi)$.

Second, for a given axis $\hat{n}$, there are four possible sets of valid Euler angles, according to \cref{eq:Rn_gate}.
In particular, if $\hat{n}$ is written in components $\hat{n}=(n_x,n_y,n_z)$, \cref{eq:Rn_gate} reads
\begin{align}
 \begin{split}
    \begin{matrix*}[l]
        n_x &\eq \cos(2\ga) \\[4pt]
    n_y &\eq -\sin(2\ga)\cos(2\beta)\\[4pt]
    n_z &\eq \sin(2\ga)\sin(2\beta)
    \end{matrix*}\hspace{20pt}
    \begin{cases}
        \theta_2(\hat{n}):=\frac{\beta}{\sqrt{2}}\\[6pt]\theta_1(\hat{n}):=2\ga
    \end{cases}
 \end{split}
 \label{eq:Euler_angles}
\end{align}
Thus, if $(\ga,\beta)$ is a solution, then $(\ga,\beta+\pi)$, $(-\ga,\beta+\pi/2)$, $(-\ga,\beta-\pi/2)$ are as well.
Therefore, for each set of axes $(\hat{n}_{\text{rot}},\hat{m}_{\text{rot}})$, we optimize \cref{eq:Euler_angle_time} over these four sets of Euler angles.

For a 3-step decomposition,
\begin{align}
    \exp\big(i \de\,[\pm\hat{n}]\cdot \vec{\sigma}\big)  \exp\big(i \de\,\hat{n}_2\cdot \vec{\sigma}\big)\exp\big(i \de\,\hat{n}_1\cdot \vec{\sigma}\big)\,,
\end{align}
we follow the same procedure described above, except that the quantity to be minimized is
\begin{align}
    |\theta_2(\hat{n})|+|\theta_2(\hat{n})-\theta_2(\hat{n}_2)| + |\theta_2(\hat{n}_2)-\theta_2(\hat{n}_1)| + |\theta_2(\hat{n}_1)|\,.
 \label{eq:Euler_angle_time_3step}
\end{align}

\appsubsec{Minimizing interaction times for any 2-qubit, PI, $\UU(1)$-invariant unitary}{app:minimizing_2qubit_unis}
In \cref{sec:useful_2q_gate}, we showed that any two-qubit PI unitary which also conserves $J_z$ can be implemented as 
\begin{align}
    U(\theta,\theta_+,\theta',\al)= e^{i\al}R_z(\theta')A(\theta,\theta_+)FR_z(\theta)F\,.
 \label{eq:F_combo2}
\end{align}
In fact, the job of the $F$-gate -- swapping the states $\ket{00}\otimes\vac$ and $\ket{11}\otimes\ket{2}_{\text{osc}}$ -- is done equally well by $F^{\dag}$.
Additionally, in the $q=2,j=1$ sector, $F$ is Hermitian and thus is its own inverse, $\pi_{2,1}(F)=\pi_{2,1}(F^{\dag})=\pi_{2,1}(F^{-1})$.
However, $F$ and $F^{\dag}$ act differently in the $q=1,j=1$ sector.
This implies that the $A$-gate in \cref{eq:F_combo2} may be different, depending on which combination of $F$ and $F^{\dag}$ are used.
In particular, for a given unitary $U\in\H_{1,1}\simeq\SU(2)$, we should optimize for the fastest $A$-gate over the four possible combinations of $F$ and $F^{\dag}$, i.e.
\bes
\begin{align}
    U &\eq e^{i\al}R_z(\theta')A_1(\theta,\theta_+)FR_z(\theta)F \\[2pt]
    U &\eq e^{i\al}R_z(\theta')A_2(\theta,\theta_+)F^{\dag}R_z(\theta)F \\[2pt]
    U &\eq e^{i\al}R_z(\theta')A_3(\theta,\theta_+)FR_z(\theta)F^{\dag} \\[2pt]
    U &\eq e^{i\al}R_z(\theta')A_4(\theta,\theta_+)F^{\dag}R_z(\theta)F^{\dag}\,,
\end{align}
\ees
where $A_{\ell}(\theta,\theta_+)$ depends on which combination is used.
Therefore, to find the $A$-gate with minimal interaction time, optimize each $A_{\ell}(\theta,\theta_+):=\exp(i\al_{\ell}\,\mu_{\ell}\cdot\vec{\sigma})$, for $\ell=1,2,3,4$, as described in the previous section.

\appsubsec{Examples: CZ, SWAP, iSWAP, $\sqrt{\text{iSWAP}}$}{app:useful_gate_table}
For these four useful gates, \cref{tab:gates_info} gives the optimal interaction times using our constructions, as well as the corresponding parameters in terms of the decomposition
\begin{align*}
    U = e^{i\al}R_z(\theta')A(\theta,\theta_+)FR_z(\theta)F\,,
\end{align*}
for those four gates, as well as $U_{ZZ}(\phi):=\exp(-i\phi Z_1Z_2)$ and $U_{\Psi^+}(-2\pi/\sqrt{3}):=\exp(i\pure{\Psi^+}2\pi/\sqrt{3})$,
\begin{table}[htp]
    \centering
    \renewcommand{\arraystretch}{1.5}
    \setlength{\tabcolsep}{6pt}
    \begin{tabular}{l!{\vrule width1.5pt}l!{\vrule width1.5pt}l|l|l|l|l}
        \textbf{Gate} &  \shortstack{\textbf{Time} \\
        ($\tau\times\gTC/2\pi$)} & $\al$ & $\theta'$ & $\theta$ & $\theta_+$ & \shortstack{Full Gate \\ Sequence}\\\hline
        CZ & $\approx2.866$ & $0$ & $\pi/2$ & $\pi/2$ & $0$ & $R_z(\frac{\pi}{2})A(\frac{\pi}{2},0)FR_z(\frac{\pi}{2})F^{\dag}$ \\
        SWAP & $\approx1.273$ & $\pi$ & $0$ & $\pi$ & $\pi$ & $e^{i\pi}A(\pi,\pi)R_z(\pi)$ \\
        iSWAP & $\approx2.546$ & $-\pi/2$ & $0$ & $\pi/2$ & $\pi$ & $e^{-i\pi/2}A(\frac{\pi}{2},\pi)FR_z(\frac{\pi}{2})F^{\dag}$ \\
        $\sqrt{\text{iSWAP}}$ & $\approx2.688$ & $-\pi/4$ & $0$ & $\pi/4$ & $\pi/2$ & $e^{-i\pi/4}A(\frac{\pi}{4},\frac{\pi}{2})F^{\dag}R_z(\frac{\pi}{4})F^{\dag}$\\
        $U_{ZZ}(\phi)$ & $\lesssim3.92$ & $\phi$ & $0$ & $-2\phi$ & $0$ & $e^{i\phi}A(-2\phi,0)FR_z(-2\phi)F$ \\
        $U_{\Psi^+}\Big($$-\frac{2\pi}{\sqrt{3}}$$\Big)$ & $\approx 0.585$ & $0$ & $0$ & $0$ & $\frac{2\pi}{\sqrt{3}}$ & $\UTC(-\frac{\pi}{4\sqrt{2}}) \UZ(-\frac{\pi}{2}) \UTC(\frac{2\pi}{\sqrt{6}}) \UZ(\frac{\pi}{2}) \UTC(\frac{\pi}{4\sqrt{2}})$
    \end{tabular}
    \caption{Implementing five two-qubit gates: interaction times, and decompositions in terms of \cref{eq:2qubit_decomp}.}
    \label{tab:gates_info}
\end{table}

\appsubsec{Parameters of $A$-gates used to implement CZ, SWAP, iSWAP, $\sqrt{\text{iSWAP}}$, $\Vswap$}{app:A_gate_params}
\Cref{tab:A_gate_params} lists the parameters of different $A$-gates used to implement CZ, SWAP, iSWAP, $\sqrt{\text{iSWAP}}$, $\Vswap$, and $A(\pi_{1,1}(F^{\dag})$.
For each $A$-gate, the four parameters $\theta_1(\hat{n}),\theta_2(\hat{n})$ and $\theta_1(\hat{m}),\theta_2(\hat{m})$ are precisely the Euler angles from \cref{eq:Rn_gate} used to implement $R_{\hat{n}}(4\pi/\sqrt{3})$ and $R_{\hat{m}}(4\pi/\sqrt{3})$, respectively.
We have checked by explicit matrix multiplication that the circuits in \cref{tab:circuit_table,fig:swap_circuit} implement the desired unitaries.
\begin{table}[htp]
 \begin{adjustbox}{width=1.0\textwidth}
    \centering
    \renewcommand{\arraystretch}{1.5}
    \setlength{\tabcolsep}{6pt}
    \begin{tabular}{l|l|l|l|l|l|l|l|l}
     & \shortstack{Decomp.\\Type} & \shortstack{$A$-gate Time \\$(\tau\times\gTC/2\pi)$} & $\theta_1(\hat{n}_2)\approx$ & $\theta_2(\hat{n}_2)\approx$ & $\theta_1(\hat{n}_1)\approx$ & $\theta_2(\hat{n}_1)\approx$ & $\theta_1(\hat{n})\approx$ & $\theta_2(\hat{n})\approx$ \\\hline
     CZ & 4-step & $\approx1.641$ & $-0.12147118$ & $-0.02458664$ & $-1.74926322$ & $-0.02483674$ && \\
     SWAP & 3-step & $\approx1.273$ & $2.62219567$& $0.05025862$ & $0.56293308$ & $0.04976850$ & $1.59079766$ & $-0.10034647$ \\
     iSWAP & 3-step & $\approx1.321$ & $-2.45780412$ & $-1.12604357$ & $-1.21165852$ & $-1.41005151$ & $1.15257779$ & $0.80892003$ \\
     $\sqrt{\text{iSWAP}}$ & 3-step & $\approx1.464$ & $2.28652854$ & $0.58015043$ & $1.69646350$ & $-0.09519097$ & $2.54885204$ & $0.07041776$ \\
     $\Vswap$ & 2-step & $\approx0.829$ & $0.77219211$& $-0.03972316$ & $2.36943563$ & $-0.03968950$ && \\
    $A(\pi_{1,1}(F^{\dag}))$ & 2-step & $\approx0.833$ & $1.65858418$ & $0.05233908$ & $1.09721596$ & $0.05232010$ &&
    \end{tabular}
 \end{adjustbox}
    \caption{Parameters for optimal $A$-gates used for the circuits in \cref{tab:circuit_table,fig:swap_circuit}, as well as $A(\pi_{1,1}(F^{\dag}))$. The six angles correspond to the decompositions in \cref{eq:three_cases} for $2$, $3$, or $4$-step decompositions, and they are obtained from numerical solutions to \cref{eq:cosA} and \cref{eq:Euler_angles}.}
    \label{tab:A_gate_params}
\end{table}

\appsubsec{$F$-gate}{app:F_gate}
In this section, we construct the $F$-gate, which implements the map in \cref{eq:F_map} and also acts non-trivially on the 2D subspace in \cref{eq:2D}.
First, consider the $q=2$, $j=1$ sector $\H_{2,1}$, with the ordered basis $\big(\ket{00}\otimes\vac,\,\ket{\Psi^+}\otimes\ket{1}_{\text{osc}},\,\ket{11}\otimes\ket{2}_{\text{osc}}\big)$. 
The eigenvalues of $\HTC$ in this sector are $0$ and $\pm\sqrt{6}$, so its time evolution operator has period $2\pi/\sqrt{6}$.
Consider the time evolution corresponding to one half of a full period,
\begin{align}
    V\,\,:=\,\,\pi_{2,1}\Big(\UTC\big(\pi/\sqrt{6}\big)\Big) \,\,:=\,\,\pi_{2,1}\left( e^{-i\HTC\pi/\sqrt{6}}\right) \eq \left(\begin{matrix} \frac{1}{3} & 0 & -\sqrt{\frac{8}{9}} \\ 0 & -1 & 0 \\ -\sqrt{\frac{8}{9}} & 0 & -\frac{1}{3}\end{matrix}\right)\,.
 \label{eq:Vdef}
\end{align}
We prove that by combining three copies of the above $3\times 3$ unitary with proper  $z$-rotations, one can realize 
\begin{align}
    \pi_{2,1}\Big(\UZ\left(\phi_0/2\right)\,V\,\UZ\left(-\phi_1\right)\,V\, \UZ\left(\phi_1\right)\,V\,\UZ\left(-\phi_0/2\right)\Big) \eq \left(\begin{matrix}0&0&1\\0&-1&0\\1&0&0\end{matrix}\right)\,,
 \label{eq:F_gate_matrix}
\end{align}
which implements the map
\begin{align}
 \begin{split}
    \ket{00}\otimes\vac&\longrightarrow\,\,\,\,\,\ket{11}\otimes\ket{2}_{\text{osc}}\\
   \ket{\Psi^+}\otimes\ket{1}_{\text{osc}}&\longrightarrow-\ket{\Psi^+}\otimes\ket{1}_{\text{osc}}\\
    \ket{11}\otimes\ket{2}_{\text{osc}}&\longrightarrow\,\,\,\,\,\,\ket{00}\otimes\vac\,.
 \end{split}
 \label{eq:F_map2}
\end{align}
Explicitly, it is straightforward to check that for angles
\begin{align}
    \phi_1 \,:=\, \frac{1}{2}\arccos\left(\frac{7}{16}\right)\qquad\text{and}\qquad\phi_0 \,:=\, \arctan\left(-\frac{\sqrt{23}}{3}\right) + \pi\,,
 \label{eq:phi0_phi1}
\end{align}
the above equation is satisfied.
At the end of this section, we give an alternative proof using a geometric argument that explains why there should exist angles $\phi_0$ and $\phi_1$ satisfying the above identity.

Therefore, we define
\begin{align}
    F \,:=\, \UZ\left(\phi_0/2\right)\,\UTC\big(\pi/\sqrt{6}\big)\,\UZ\left(-\phi_1\right)\,\UTC\big(\pi/\sqrt{6}\big)\, \UZ\left(\phi_1\right)\,\UTC\big(\pi/\sqrt{6}\big)\,\UZ\left(-\phi_0/2\right)\,.
\end{align}
Recall that $F$ also acts non-trivially on the 2D subspace in \cref{eq:2D}, i.e., the $q=1$, $j=1$ sector $\H_{1,1}$.
Recall from \cref{eq:charge1_pauli} that if $\big(\ket{\Psi^+}\otimes\vac,\ket{11}
\otimes\ket{1}_{\text{vac}}\big)$ is an ordered basis of $\H_{1,1}$, then $\HTC$ and $\Zhat$ act like $\sigma_x$ and $\sigma_z$ in sector $\H_{1,1}$.
Then, the component of $F$ in this sector is
\begin{align}
 \label{eq:Fgate_q=1_relic}
     \pi_{1,1}(F) &= e^{-i\sigma_z\phi_0/4}\,e^{-i\sigma_x\pi/\sqrt{3}}\,e^{i\sigma_z\phi_1/2}\,e^{-i\sigma_x\pi/\sqrt{3}}\,e^{-i\sigma_z\phi_1/2}\,e^{-i\sigma_x\pi/\sqrt{3}} e^{i\sigma_z\phi_0/4} \\[10pt]\nonumber
     &\hspace{-16pt}=\begin{pmatrix}
         \cos\left(\frac{\de}{2}\right)\Big[\big(1+\cos(\phi_1)\big)\cos(\de)-\cos(\phi_1)\Big] & -i\sin\left(\frac{\de}{2}\right)e^{-i\frac{\phi_0}{2}}\Big[\big(1+\cos(\phi_1)\big)\cos(\de)+i\sin(\phi_1)+1\Big]\\
         -i\sin\left(\frac{\de}{2}\right)e^{i\frac{\phi_0}{2}}\Big[\big(1+\cos(\phi_1)\big)\cos(\de)-i\sin(\phi_1)+1\Big]& \cos\left(\frac{\de}{2}\right)\Big[\big(1+\cos(\phi_1)\big)\cos(\de)-\cos(\phi_1)\Big]
     \end{pmatrix}\,,
\end{align}
with $\de:=\frac{2\pi}{\sqrt{3}}$ and $\phi_0,\phi_1$ as in \cref{eq:phi0_phi1}.
Note that $\pi_{1,1}(F)$ is a Bloch sphere rotation about an axis in the $x$-$y$ plane.

The total interaction time of an $F$-gate is
\begin{align}
    \tau_F = \frac{3\pi}{\sqrt{6}} \,\,\approx\,\,0.612\times2\pi\gTC^{-1}\,,
\end{align}
and the corresponding circuit is illustrated in \cref{tab:circuit_table}.

If one wishes to realize the map in \cref{eq:F_map}, i.e. swapping states $\ket{00}\otimes\vac\longleftrightarrow\ket{11}\otimes\ket{2}_{\text{osc}}$, \textit{without} a non-trivial relic in the 2D sector $\H_{1,1}$, then apply an $A$-gate which cancels out $\pi_{1,1}(F)$, that is $A(\pi_{1,1}(F^{\dag}))$.
The interaction time of this $A$-gate is approximately $0.833\times2\pi\gTC^{-1}$ (see \cref{tab:A_gate_params}).

\appsubsec{A geometric explanation of the $F$-gate}{app:F_gate_geometry}
Here, we present a geometric argument that explains why there exists  $\phi_0$ and $\phi_1$ satisfying  \cref{eq:F_gate_matrix}.
Consider the two-level subspace spanned by $\ket{0}_{\text{eff}}:=\ket{00}\otimes\vac$ and $\ket{1}_{\text{eff}}:=\ket{11}\otimes\ket{2}_{\text{osc}}$, and let these states denote the $\pm1$ eigenstates, respectively, of Pauli matrix $\sigma_z$.
Then, in this subspace, $V$ (defined in \cref{eq:Vdef}) and $\UZ(\theta)$ act as
\begin{align}
 \label{eq:2dim_sector_V}
    \widetilde{V} \,=&\,\,e^{i\pi/2}\exp(-i\sigma_z\pi/2)\exp\big(-i\sigma_y\theta_0/2\big)\,\,,\quad\theta_0=2\arccos(1/3)\approx0.78\pi\,. \\[4pt]
    \widetilde{R}_z(\theta) \,=&\,\, \exp\big(-i\sigma_z\theta\big)
 \label{eq:2dim_sector_Rz}
\end{align}
In other words, on the Bloch sphere, $\widetilde{R}_z(\theta)$ is a rotation by $2\theta$ about $z$, and $\widetilde{V}$ is, up to a global phase, a rotation by fixed angle $2\arccos(1/3)\approx0.78\pi$ about $y$, followed by a $\pi$-rotation about $z$.
Also, note that unitary $\widetilde{V}$ is also Hermitian, and hence its own inverse.

To obtain the map in \cref{eq:F_map}, we argue geometrically that there are angles $\phi_0,\phi_1$ such that
\begin{align}
    \widetilde{V}\widetilde{R}_z(-\phi_1)\widetilde{V}\widetilde{R}_z(\phi_1)\widetilde{V}\, \ket{0}_{\text{eff}} \,\eq\, e^{-i\phi_0} \ket{1}_{\text{eff}}\,,
\end{align}
or equivalently,
\begin{align}
    \widetilde{R}_z(-\phi_1)\widetilde{V}\widetilde{R}_z(\phi_1)\times\big(\widetilde{V}\ket{0}_{\text{eff}}\big) \,\eq\, e^{-i\phi_0} \big(\widetilde{V}\ket{1}_{\text{eff}}\big)\,.
 \label{eq:F_geometry_map}
\end{align}
Using \cref{eq:2dim_sector_V,eq:2dim_sector_Rz}, and bringing the $\exp(-i\sigma_z\pi/2)$ from inside $\widetilde{V}$ on the left-hand side over to the right-hand side, \cref{eq:F_geometry_map} can be written as
\begin{align}
    \exp\big(i\sigma_z\phi_1\big)\exp\big(-i\sigma_y\theta_0/2\big) \exp\big(-i\sigma_z\phi_1\big) \widetilde{V}\ket{0}_{\text{eff}} \eq e^{-i(\phi_0+\pi/2)}\exp\big(i\sigma_z\pi/2\big)\widetilde{V}\ket{1}_{\text{eff}}\,.
 \label{eq:F_geometry2}
\end{align}
For an appropriate choice of $\phi_1\in[0,2\pi)$, the operator
\begin{align}
    \exp\big(i\sigma_z\phi_1\big)\exp\big(-i\sigma_y\theta_0/2\big) \exp\big(-i\sigma_z\phi_1\big) \,\eq\, \exp\Big(-i(\theta_0/2)\big(\sin(\phi_1)\hat{x}+\cos(\phi_1)\hat{y}\big)\cdot\vec{\sigma}\Big)
 \label{eq:xy_rotation}
\end{align}
realizes a rotation by fixed angle $\theta_0$ about an \textit{arbitrary} axis in the $x$-$y$ plane.

The vectors $\exp\big(i\sigma_z\pi/2\big)\widetilde{V}\ket{1}_{\text{eff}}$ and $\widetilde{V}\ket{0}_{\text{eff}}$ from \cref{eq:F_geometry2} are illustrated in purple in \cref{fig:F_gate_geometry}.
Importantly, these vectors lie in the $x$-$z$ plane, and 
because the angle $2\theta_0-\pi\approx0.57\pi$ between them is small enough, i.e. less than fixed angle $\theta_0\approx0.78\pi$, a suitable rotation of the form in \cref{eq:xy_rotation} can map one vector to the other; up to a potential phase, which is denoted $-(\phi_0+\pi/2)$.
Therefore, the sequence $\widetilde{V}\widetilde{R}_z(-\phi_1)\widetilde{V}\widetilde{R}_z(\phi_1)\widetilde{V}$ maps $\ket{0}_{\text{eff}}$ to $e^{-i\phi_0}\ket{1}_{\text{eff}}$.
Finally, conjugating this sequence by $\widetilde{R}_z(\phi_0/2)$ removes undesired phases and yields \cref{eq:F_gate_matrix}.

\begin{figure*}[htp]
    \centering
    \includegraphics[width=0.9\textwidth]{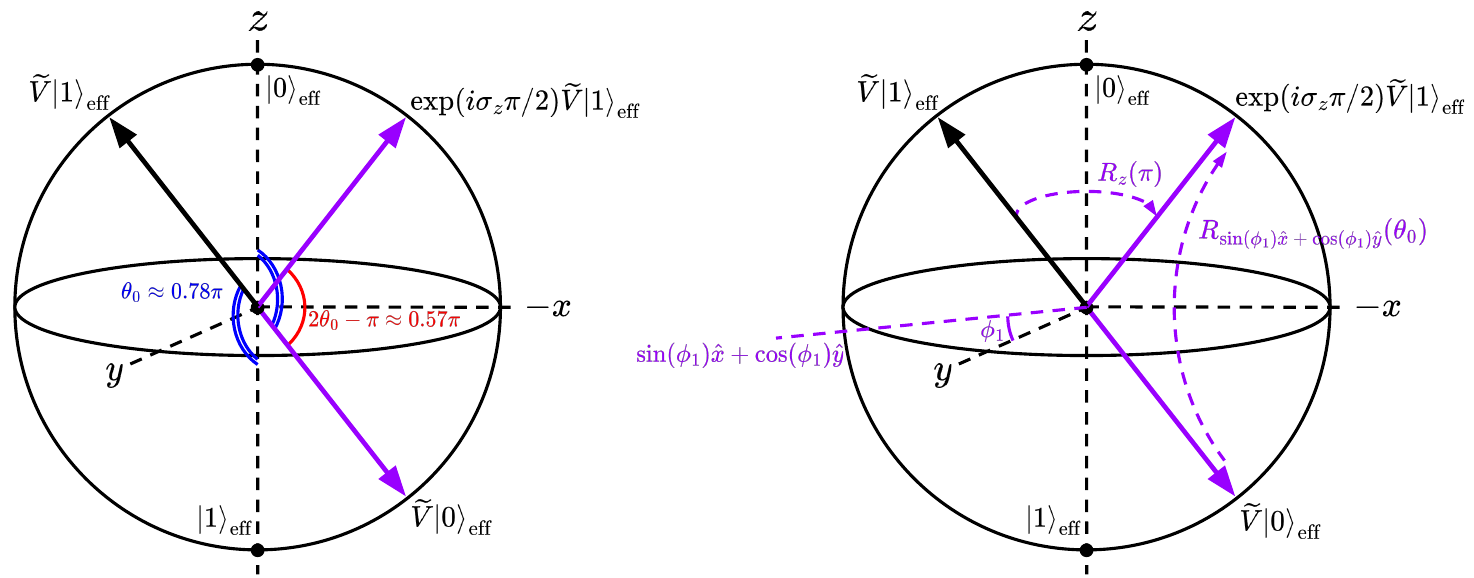}
    \caption{\textbf{Bloch Sphere geometry of the $F$-gate:} The left picture illustrates the locations of $\widetilde{V}\ket{0}_{\text{eff}}$, $\widetilde{V}\ket{1}_{\text{eff}}$, and  $\exp(i\sigma_z\pi/2)\widetilde{V}\ket{1}_{\text{eff}}$ on the Bloch sphere. 
    All three vectors lie in the $x$-$z$ plane.
    The right picture illustrates how $\widetilde{V}\ket{0}_{\text{eff}}$ is mapped to $\exp(i\sigma_z\pi/2)\widetilde{V}\ket{1}_{\text{eff}}$ via the rotation in \cref{eq:xy_rotation}.}
    \label{fig:F_gate_geometry}
\end{figure*}

\newpage
\appsec{Qubit-Oscillator SWAP for $n=2$  qubits}{app:state_prep_circuit}
In this section, we construct the unitary $\Vswap$ for the case of $n=2$ qubits, which implements the gadget described in \cref{sec:oscillator_state_prep}.
\begin{align}
    \Vswap\big(\ket{\psi}\otimes\vac\big) \eq \ket{1}^{\otimes 2}\otimes\ket{\Psi}_{\text{osc}}\,,
\end{align}
where $\ket{\psi}\in\C^{3}$ is an arbitrary state in the 3-dimensional symmetric (triplet) space of two qubits.

As a matrix in the ordered  basis from \cref{eq:2q_subspace}, $\Vswap$ can be implemented by the following unitary:
\begin{align*}
    \Vswap \eq \arraycolsep=4pt\def\arraystretch{1.4}\left(
    \begin{array}{c|cc|ccc}
        1 &&&&&\\ \hline
        & 0 & -1 &&& \\
        & 1 & 0 &&& \\ \hline
        &&& 0 & 0 & 1 \\
        &&& 0 & -1 & 0 \\
        &&& 1 & 0 & 0
    \end{array}
    \right)\,.
\end{align*}
Note that $\ket{\psi}\otimes\vac$ only has nonzero components in the lowest three charge sectors, so it is sufficient to consider just these sectors.
First, the $F$-gate introduced in \cref{sec:useful_2q_gate}, and constructed in \cref{app:F_gate}, implements the desired component of $\Vswap$ in the 3D $q=2$ sector. 
Then, note that in each of the $q=0$ (1D) and $q=1$ (2D) sectors, this unitary has determinant one, so the components of $F^{\dag}$ in these sectors are
\begin{align*}
    F^{\dag} 
    \eq \arraycolsep=4pt\def\arraystretch{1.4}\left(
    \begin{array}{c|c|ccc}
        1 &&&&\\ \hline
        & \widetilde{V}_1 &&& \\ \hline
        && 0 & 0 & 1 \\
        && 0 & -1 & 0 \\
        && 1 & 0 & 0
    \end{array}
    \right)\,,
\end{align*}
where $\widetilde{V}_1\in\SU(2)$ is given by the fixed unitary in \cref{eq:Fgate_q=1_relic}.
Second, apply the $A$-gate
\begin{align}
    A_{\text{swap}} \,\,:=A\left((-i\sigma_y)\widetilde{V}_1^{\dag}\right) \eq \arraycolsep=6pt\def\arraystretch{1.8}\left(
    \begin{array}{c|c|c}
        1 && \\ \hline
        & (-i\sigma_y)\widetilde{V}_1^{\dag} & \\ \hline 
        && \1
    \end{array}
    \right)\,,
 \label{eq:Swap_A1}
\end{align}
to implement the desired component in the $q=1$ sector, such that
\begin{align}
 \begin{split}
    \Vswap &\eq A_{\text{swap}}\cdot F^{\dag} \eq \arraycolsep=4pt\def\arraystretch{1.4}\left(
    \begin{array}{c|cc|ccc}
        1 &&&&&\\ \hline
        & 0 & -1 &&& \\
        & 1 & 0 &&& \\ \hline
        &&& 0 & 0 & 1 \\
        &&& 0 & -1 & 0 \\
        &&& 1 & 0 & 0
    \end{array}
    \right)\,.
 \end{split}
\end{align}
This particular $A$-gate is 2-step (see details in \cref{app:A_gate}, circuit in \cref{tab:gadget_circuits}), and its interaction time is
\begin{align}
    \tau_{\text{swap}} \eq \tau_F + \tau_{A_{\text{swap}}} \,\,\approx\,\,1.44\times2\pi\gTC^{-1}\,.
\end{align}

\end{document}